\documentclass[final,onefignum,onetabnum]{siamart190516}
\usepackage{amssymb,graphicx,bm}
\usepackage{amsmath}
\usepackage{xcolor}
\usepackage{caption, subcaption}
\usepackage{verbatim}

\newtheorem{remark}{Remark}

\title{Optimal Control of Closed Quantum Systems via B-Splines with Carrier Waves\thanks{Submitted to the editors on \today
    \funding{This work was supported in part by LLNL Laboratory Directed Research and Development project 20-ERD-028 and in part by DOE Office of Advanced Scientific Computing Research under the Advanced Research in Quantum Computing program, award 2019-LLNL-SCW-1683.}}} 

\author{N. Anders Petersson\thanks{Center for Applied Scientific Computing, LLNL, Livermore, CA 94550, USA (\email{petersson1@llnl.gov})}
  \and Fortino Garcia\thanks{Department of Applied Mathematics, University of Colorado, Boulder, CO 80309, USA (\email{fortino.garcia@colorado.edu})}}

\begin{document}
\newcommand{\dzx}{D_0^x}
\newcommand{\dzxp}{D_{0p}^x}
\newcommand{\wdzx}{\widetilde{D_0^x}}
\newcommand{\wdzxp}{\widetilde{D_{0p}^x}}
\newcommand{\dpx}{D_+^x}
\newcommand{\dmx}{D_-^x}
\newcommand{\dzy}{D_0^y}
\newcommand{\dzyp}{D_{0p}^y}
\newcommand{\wdzy}{\widetilde{D_0^y}}
\newcommand{\dpy}{D_+^y}
\newcommand{\dmy}{D_-^y}
\newcommand{\dzz}{D_0^z}
\newcommand{\wdzz}{\widetilde{D_0^z}}
\newcommand{\dpz}{D_+^z}
\newcommand{\dmz}{D_-^z}
\newcommand{\dzt}{D_0^t}
\newcommand{\dpt}{D_+^t}
\newcommand{\dmt}{D_-^t}
\newcommand{\ehx}{E_{1/2}^x}
\newcommand{\ehy}{E_{1/2}^y}
\newcommand{\ehz}{E_{1/2}^z}
\newcommand{\calo}{{\cal O}}
\newcommand{\ab}{\bm{ a}}
\newcommand{\bb}{\bm{ b}}
\newcommand{\db}{\bm{ d}}
\newcommand{\eb}{\bm{e}}
\newcommand{\fb}{\bm{ f}}
\newcommand{\gb}{\bm{ g}}
\newcommand{\hb}{\bm{ h}}
\newcommand{\ib}{\bm{ i}}
\newcommand{\jb}{\bm{ j}}
\newcommand{\nb}{\bm{ n}}
\newcommand{\pb}{\bm{ p}}
\newcommand{\tb}{\bm{ t}}
\newcommand{\rb}{\bm{ r}}
\newcommand{\yb}{\bm{ y}}
\newcommand{\zb}{\bm{ z}}
\newcommand{\qb}{\bm{ q}}
\newcommand{\ub}{\bm{ u}}
\newcommand{\vb}{\bm{ v}}
\newcommand{\wb}{\bm{ w}}

\newcommand{\Ab}{\bm{ A}}
\newcommand{\Bb}{\bm{ B}}
\newcommand{\Eb}{\bm{ E}}
\newcommand{\Fb}{\bm{ F}}
\newcommand{\Ib}{\bm{ I}}
\newcommand{\Hb}{\bm{ H}}
\newcommand{\Kb}{\bm{ K}}
\newcommand{\Lb}{\bm{ L}}
\newcommand{\Pb}{\bm{ P}}
\newcommand{\Qb}{\bm{ Q}}
\newcommand{\Rb}{\bm{ R}}
\newcommand{\Ub}{\bm{ U}}
\newcommand{\Tb}{\bm{ T}}
\newcommand{\Xb}{\bm{ X}}

\newcommand{\Abb}{\mathbb{A}}
\newcommand{\Bbbb}{\mathbb{B}}
\newcommand{\Cbb}{\mathbb{C}}
\newcommand{\Ebb}{\mathbb{E}}
\newcommand{\Fbb}{\mathbb{F}}
\newcommand{\Ibb}{\mathbb{I}}
\newcommand{\Hbb}{\mathbb{H}}
\newcommand{\Kbb}{\mathbb{K}}
\newcommand{\Lbb}{\mathbb{L}}
\newcommand{\Pbb}{\mathbb{P}}
\newcommand{\Qbb}{\mathbb{Q}}
\newcommand{\Rbb}{\mathbb{R}}
\newcommand{\Ubb}{\mathbb{U}}
\newcommand{\Tbb}{\mathbb{T}}
\newcommand{\Xbb}{\mathbb{X}}
\newcommand{\Ybb}{\mathbb{Y}}
\newcommand{\Zbb}{\mathbb{Z}}

\newcommand{\uh}{\hat{u}}
\newcommand{\vh}{\hat{v}}
\newcommand{\ph}{\hat{p}}
\newcommand{\qh}{\hat{q}}

\newcommand{\re}{{\rm Re}\,}
\newcommand{\im}{{\rm Im}\,}

\renewcommand{\arraystretch}{1.3}
\newcommand{\p}{\partial}
\newcommand{\pdiff}[2]{\frac{\partial {#1}}{\partial {#2}}}
\newcommand{\Lag}{\mathcal{L}}

\newcommand{\eq}{\!\!\! = \!\!\!}
\newcommand{\om}{\omega}

\newcommand{\alphab}{\boldsymbol{\alpha}}
\newcommand{\lambdab}{\boldsymbol{\lambda}}
\newcommand{\psib}{\boldsymbol{\psi}}
\newcommand{\phib}{\boldsymbol{\phi}}
\newcommand{\Psib}{\boldsymbol{\Psi}}

\newcommand{\rhob}{\boldsymbol{\rho}}
\newcommand{\kab}{\boldsymbol{\kappa}}
\newcommand{\etab}{\boldsymbol{\eta}}
\newcommand{\zetab}{\boldsymbol{\zeta}}
\newcommand{\sigmab}{\boldsymbol{\sigma}}
\newcommand{\omegab}{\boldsymbol{\omega}}
\newcommand{\Gb}{\bm{ G}}
\newcommand{\kb}{\bm{ k}}
\newcommand{\sbold}{\bm{ s}}
\newcommand{\ba}{\begin{array}}
\newcommand{\ea}{\end{array}}
\newcommand{\be}{\begin{equation}}
\newcommand{\ee}{\end{equation}}
\newcommand{\bd}{\begin{displaymath}}
\newcommand{\ed}{\end{displaymath}}
\newcommand{\pa}{\partial}
\newcommand{\f}{\frac}
\newcommand{\drp}{D^r_+}
\newcommand{\drm}{D^r_-}
\newcommand{\dqp}{D^q_+}
\newcommand{\dqm}{D^q_-}
\newcommand{\dtqn}{\widetilde{{D^q_0}} }
\newcommand{\dtrn}{\widetilde{{D^r_0}} }
\newcommand{\dqn}{D^q_0}
\newcommand{\drn}{D^r_0}
\newcommand{\erh}{E^r_{1/2}}
\newcommand{\eqh}{E^q_{1/2}}

\def\dpl{D_+}
\def\dmi{D_-}

\newcommand{\ubbar}{\bar\bm{{u}}}
\newcommand{\ubar}{\bar{u}}

% Numerical solutions
% \newcommand{\vb}\bm{{v}}

% Grids
\newcommand{\xb}{\bm{x}}
\newcommand{\ybh}{\hat\bm{{x}}}
\newcommand{\xbh}{\hat\bm{{x}}}
\newcommand{\Ja}{J_{\alpha}}
\newcommand{\ga}{g_{\alpha}}
\newcommand{\Ma}{M_{\alpha}}

% Interpolation
\newcommand{\Nxy}{N_{\mathbf{x}\rightarrow\hat{\mathbf{x}}}}
\newcommand{\Nyx}{N_{\hat{\mathbf{x}}\rightarrow\mathbf{x}}}
\newcommand{\Nuv}{N_{\bar{\ub}_1\rightarrow\bar{\ub}_2}}
\newcommand{\Nvu}{N_{\bar{\ub}_2\rightarrow\bar{\ub}_1}}
\newcommand{\Nij}{N_{\bar{\ub}_i\rightarrow\bar{\ub}_j}}
\newcommand{\Nji}{N_{\bar{\ub}_j\rightarrow\bar{\ub}_i}}
\newcommand{\Px}{P}
\newcommand{\Pxh}{\hat{P}}

% Domains
\newcommand{\domp}{\Omega_{p}}
\newcommand{\domu}{\Omega_{\ub}}
\newcommand{\domv}{\Omega_{\vb}}
\newcommand{\domui}{\Omega_{\bar{\mathbf{u}}_i}}
\newcommand{\domuj}{\Omega_{\bar{\mathbf{u}}_j}}
\newcommand{\gamja}{\Gamma_{j0}}
\newcommand{\gamjb}{\Gamma_{j1}}
\newcommand{\gamia}{\Gamma_{i0}}
\newcommand{\gamib}{\Gamma_{i1}}

\maketitle

\begin{abstract}
We consider the optimal control problem of determining electromagnetic
pulses for implementing logical gates in a closed quantum system, where the Hamiltonian models the dynamics of coupled superconducting qudits. The quantum state is governed by Schr\"odinger's equation, which we formulate in terms of the real and imaginary parts of the state vector and solve by the St\"ormer-Verlet scheme, which is a symplectic partitioned Runge-Kutta method. A novel parameterization of the control functions based on B-splines with carrier waves is introduced. The carrier waves are used to trigger the resonant frequencies in the system Hamiltonian, and the B-spline functions specify their amplitude and phase. This approach allows the number of control parameters to be independent of, and significantly smaller than, the number of time steps for integrating Schr\"odinger's equation.

We present numerical examples of how the proposed technique can be combined with an interior point L-BFGS algorithm for realizing quantum gates, and generalize our approach to calculate risk-neutral controls that are resilient to noise in the Hamiltonian model. The proposed method is also shown to compare favorably with QuTiP/pulse\_optim and Grape-Tensorflow.

\end{abstract}

\begin{keywords}
  Quantum control, B-splines, Symplectic Runge-Kutta method, ODE-constrained optimization, Quantum computing.
\end{keywords}

\begin{AMS}
  49M25, 65D07, 65L06, 81Q93
\end{AMS}

\graphicspath{{figures/}}

\section{Introduction}
We consider the quantum optimal control problem of determining electromagnetic pulses for implementing unitary gates in a quantum computer. A truncated modal expansion of Schr\"odinger's equation is used to model the quantum system, in which the state of the quantum system is described by a state vector in an $N$-dimensional Hilbert space\footnote{This paper uses conventional matrix-vector notation. Its basic relations with the bra-ket notation are $|\psi\rangle = \psib$, $\langle\psi| = \psib^\dagger$, $\langle \phi|A|\psi\rangle = \phib^\dagger A \psib$, and $| \phi \rangle \langle \psi| = \phib \psib^\dagger$.}  $\psib\in \mathbb{C}^N$. The elements of the state vector are complex probability amplitudes, where the magnitude squared of each element represents the probability that the quantum system occupies the corresponding state~\cite{Nielsen-Chuang}. Because the probabilities must sum to one, the state vector is normalized such that $\| \psib \|_2^2 = 1$. The evolution of the state vector in the time interval $t\in[0,T]$ is governed by Schr\"odinger's equation:
\begin{equation}\label{eq:schrodinger_vector}
\frac{d \psib}{d t} + {i} H(t;\boldsymbol{\alpha}) \psib = 0, \quad 0\leq t\leq T, \quad \psib(0) = \gb.
\end{equation}
Here, $i=\sqrt{-1}$ is the imaginary unit and $\gb$ is the initial state. The Hamiltonian matrix $H(t;\boldsymbol{\alpha})\in\mathbb{C}^{N\times N}$ (scaled such that Planck's constant becomes $\hbar=1$) is Hermitian and is assumed to be of the form
\begin{align}
    H(t;\alphab) = H_s + H_c(t;\alphab),
\end{align}
where $H_s$ and $H_c $ are the system and control Hamiltonians, respectively. The control Hamiltonian models the action of external control fields on the quantum system. 
The time dependence in the control Hamiltonian is parameterized by the control vector $\alphab\in\mathbb{R}^D$. As a result, the state vector $\psib$ depends implicitly on $\alphab$ through Schr\"odinger's equation.

Following Palao et al.~\cite{Palalo-2008}, we divide the Hilbert space into an essential and a guarded subspace. The essential subspace is of dimension $E$ and spans the states with the lowest energy levels in the system. These are the states used for quantum information processing. The guarded subspace has dimension $G$, such that $E+G=N$. The purpose of the guarded subspace is to eliminate any couplings to even higher states, which are excluded from the model.

The goal of the quantum optimal control problem is to determine the parameter vector $\alphab$ such that the time-dependence in the Hamiltonian matrix leads to a solution of Schr\"odinger's equation such that $\psib(T) \approx V_{tg} \gb$, where $V_{tg}$ is the target gate transformation. The gate transformation should be satisfied for all initial conditions in the essential subspace of the state vector. A basis of this subspace is provided by the matrix $U_0 \in\mathbb{R}^{N\times E}$. The definitions of $U_0$ and $V_{tg}$ are described in Appendix~\ref{app_essential}.

To account for any initial condition in the essential subspace, we define the solution operator matrix $U\in\mathbb{C}^{N\times E}$. Each column of this matrix satisfies \eqref{eq:schrodinger_vector}, leading to Schr\"odinger's equation in matrix form,
\begin{equation}\label{eq:schrodinger_matrix}
\frac{d U(t)}{d t} + {i} H(t;\boldsymbol{\alpha}) U(t) = 0, \quad 0\leq t\leq T, \quad U(0) = U_0.
\end{equation}
The overlap between the target gate matrix and the solution operator matrix at the final time is defined by $O_T:=\left\langle U(T;\alphab), V_{tg} \right\rangle_F$, where $\langle\cdot,\cdot\rangle_F$ denotes the Frobenious matrix scalar product. Because $U_0$ spans an $E$-dimensional subspace of initial conditions, we have $|O_T| \leq E$. The difference between $U(T;\alphab)$ and $V_{tg}$ can be measured by the target gate infidelity~\cite{qutip, Leung-2017, Lucarelli-2018, Machnes-2018, Shi_2019},
\begin{equation}\label{eq:objf}
  {\cal J}_1(U_T(\alphab)) := 1 - \frac{1}{E^2} \left| \left\langle U_T(\alphab), V_{tg} \right\rangle_F \right|^2,\quad U_T(\alphab) := U(T;\alphab).
\end{equation}
Note that the target gate infidelity is invariant to global phase differences between $U_T$ and $V_{tg}$. In quantum physics, the global phase of a state is considered irrelevant because it cannot be measured.

If the evolution generated by optimal control pulses should explore some of the higher states in the guarded subspace, the truncation of the Hilbert space may introduce a false nonlinearity that can misguide the optimization~\cite{Leung-2017}. Including additional states can, in principle, mitigate this problem, but is computationally expensive. Also, higher states can be difficult to model accurately. To address these challenges, we introduce an intermediate-time objective function to penalize population of states in the guarded subspace,
\begin{equation}\label{eq:objf-guard}
  {\cal J}_2(U(\cdot;\alphab)) = \frac{1}{T} \int_0^T 
  \left\langle U(t;\alphab), W U(t; \alphab) \right\rangle_F \, dt.
\end{equation}
This term is called the leakage.
Here, $W$ is an $N\times N$ Hermitian positive semi-definite matrix. To derive an appropriate expression for $W$, we start by defining a projection matrix $P_e$, such that $\psib_e = P_e \psib$ corresponds to the essential part of the state vector $\psib$. The population in the essential subspace is bounded by $0\leq \langle \psib, P_e \psib\rangle_2 \leq 1$. Maximizing the population in the essential subspace corresponds to minimizing it in the guarded subspace, defined by the orthogonal projection $P_g = (I-P_e)$. Let the guarded subspace be spanned a basis of orthonormal forbidden states~\cite{Leung-2017},  $\{\psib_{f,k}\}_{k=1}^{G}$, and construct the projection as $P_g=\sum_k \psib_{f,k} \psib_{f,k}^\dagger$. If there is more than one forbidden state, it may be desirable to introduce a state-dependent weight coefficient, leading to
\begin{align}\label{eqn::GeneralGuard}
    W=\sum_{k=1}^{G} \gamma_k\psib_{f,k} \psib_{f,k}^\dagger, \quad \gamma_k > 0.
\end{align}

From the above construction follows that if all columns of the matrix $U(t;\alphab)$ are in the essential subspace, $WU(t;\alphab)=\bm{0}$. If that equality holds for all $t\in[0,T]$, we have ${\cal J}_2 = 0$, corresponding to zero population of the guarded subspace and maximal population of the essential subspace.

For the quantum control problem with guard states, we formulate the optimization problem as
\begin{gather}
  \mbox{min}_{\bm{\alpha}}\,  {\cal G}(\alphab) := {\cal
    J}_1(U_T(\alphab)) + {\cal J}_2(U(\cdot; \alphab)), \label{eq:objf-total}\\ 
  \frac{dU}{dt} + {i} H(t; \alphab) U = 0,\quad 0\leq t\leq T, \quad
  U(0;\alphab) = U_0,\label{eq:schrodinger-matrix-essential} \\
  \alpha_{min} \leq \alpha_q \leq \alpha_{max},\quad q=1,2,\ldots,D.\label{eq_ineq-constraints2}
\end{gather}
For a discussion of the solvability of the quantum control problem, see for example Borzi et al.~\cite{Borzi-17}.

There are several different approaches to numerically determine controls that yield high fidelity quantum logic gates. Perhaps some of the simplest approaches to implement are gradient-free  optimization methods, such as the CRAB algorithm~\cite{Caneva-2011}. The CRAB algorithm is constructed by choosing a functional approximation of the control functions (Lagrange polynomials, Fourier basis, etc.) to truncate the infinite dimensional control problem into a finite dimensional, multivariate optimization problem. This optimization problem is solved using a direct search method without the need for a gradient, which typically comes at the cost of slower convergence 
compared to gradient-based techniques unless the number of control parameters is small. 

For gradient based approaches, there are broadly two classes of methods for computing the gradient of the objective function. The first accumulates the gradient by forward time-stepping, as is done in the GOAT algorithm~\cite{Machnes-2018}. This approach allows the gradient of the objective function to be calculated exactly, but requires $(D+1)$ Schr\"odinger systems to be solved when the control functions depend on $D$ parameters. This makes the method computationally expensive when the number of parameters is large. 
The second approach for computing the gradient solves an adjoint equation backwards in time. This approach, together with the use of a second order accurate Magnus scheme~\cite{HairerLubichWanner-06}, leads to the GRAPE algorithm~\cite{Khaneja-2005,Leung-2017}. Of note is that a stair-step approximation of the control functions is imposed such that each control function is constant within each time step. As a consequence, the time step determines both the numerical accuracy of the state vector, {\em and} the number of control parameters. With $Q$ control functions, and $M$ time steps of size $h$, the control functions are thus described by $D = M\cdot Q$ parameters. For problems in which the duration 
of the gate is long, or the quantum state is highly oscillatory, it is clear that the size of the parameter space becomes very large, which may hamper the convergence of the GRAPE algorithm. 

As seen with the CRAB algorithm, the total number of parameters can be reduced by instead expanding the control functions in terms of basis functions. By using the chain rule, the gradient from the GRAPE algorithm can then be used to calculate the gradient with respect to the coefficients in the basis function expansion. This approach is implemented in the GRAFS algorithm~\cite{Lucarelli-2018}, where the control functions are expanded in terms of Slepian sequences. Many other parameterizations of quantum control functions have been proposed in the literature, for example cubic splines~\cite{Ewing-1990}, Gaussian pulse cascades~\cite{Emsley-1989}, and Fourier expansions~\cite{Zax-1988}.

This paper presents a different approach, based on parameterizing the control functions by B-spline basis functions with carrier waves.
We note that the idea of using carrier waves has been used extensively for controlling molecular systems with lasers, see e.g.~\cite{ShiRab-92, CombarizaEtAl-91, Brumer-92}.
%see Figure~\ref{fig_bspline}. 
The carrier wave approach relies on the observation that transitions between the energy levels in a quantum system are triggered by resonance, at frequencies that can be determined by the system Hamiltonian. The carrier waves are used to specify the frequency spectra of the control functions, while the B-spline functions specify their envelope and phase, leading to narrow-band control functions where the frequency content is concentrated near the resonant frequencies of the system.
We find that our approach allows the number of control parameters to be independent of, and significantly smaller than, the number of time steps for integrating Schr\"odinger's equation.
\iffalse
\begin{figure}
    \centering
    \includegraphics[width=0.6\linewidth]{}
    \caption{An example of three quadratic B-spline basis functions with carrier wave frequencies $(0,
      \xi, 2\xi)$.}  \label{fig_bspline} 
\end{figure}
\fi

The remainder of the paper is organized as follows. 
In Section~\ref{sec_resFreq}, we introduce a Hamiltonian model and 
discuss the resonant frequencies needed to trigger transitions between the states in the system. These resonant frequencies naturally motivate us to parameterize the control functions using B-splines with carrier waves; details of this parameterization are presented in Section~\ref{sec:B-Splines}.
In Section~\ref{sec_real}, we introduce a real-valued formulation of Schr\"odinger's equation and present the symplectic St\"ormer-Verlet scheme that we use for its time-integration.
To achieve an exact gradient of the discrete objective function, we apply the ``first-discretize-then-optimize'' approach. Based on the St\"ormer-Verlet scheme, in Section~\ref{sec_disc} we outline the construction of a discrete adjoint time integration method. Section~\ref{sec_numopt} presents numerical examples. We illustrate how the proposed technique, combined with the interior point L-BFGS algorithm~\cite{Nocedal-Wright} from the IPOPT package~\cite{Wachter2006}, is used to optimize control functions for multi-level qudit gates. We additionally consider a simple noise model and risk-neutral optimization to illustrate the construction of controls that are robust to uncertainty in the Hamiltonian.
The proposed scheme is implemented in the Julia~\cite{julia} programming language, in an open source package called Juqbox.jl~\cite{Juqbox-software}. In Section~\ref{sec_compare}, we compare the performance of Juqbox.jl and two implementations of the GRAPE algorithm. Concluding remarks are given in Section~\ref{sec_conc}.

\section{Hamiltonian model}\label{sec_resFreq}

The Hamiltonian for describing the quantum physics of super-conducting circuits is of the general form
\begin{align*}
    H(t) = H_s + H_c(t),
\end{align*}
where $H_s$ is the system Hamiltonian and $H_c(t)$ is the control Hamiltonian. Both terms are Hermitian matrices. The time-independent system Hamiltonian is in general not diagonal but, because it is Hermitian, always admits an eigenvalue decomposition $H_s = V \widetilde{H}_s V^\dag$, where $V$ is unitary and $\widetilde{H}_s$ is diagonal. A change of basis via the eigenvectors of $H_s$ yields the transformed Hamiltonian
\begin{align}\label{eqn::HamDiagonalized}
    \widetilde{H}(t) = \widetilde{H}_s + \widetilde{H}_c(t), \quad \widetilde{H}_c(t) = V^\dag H_c(t) V,
\end{align}
with a diagonal time-independent Hamiltonian $\widetilde{H}_s$.

The eigen-decomposition of a general $N\times N$ system Hamiltonian can be calculated numerically as long as the dimension of the Hilbert space is not too large (for a spin-chain Hamiltonian, the size limit has been estimated to $N\leq 5\cdot 10^4$ on present day computers~\cite{Sierant-2020}). Analytical formulae for diagonalizing $H_s$ are in general difficult to derive. However, approximate techniques based on Bogoliubov~\cite{Boisson-Etal-09} transformations and Schrieffer-Wolff~\cite{SW-Bravyi-11} expansions can in some cases be used to diagonalize the system Hamiltonian. For example, a quantum system with $Q$ components consisting of transmon qubits coupled to resonators can, in the dispersive limit, be modeled by (see~\cite{BlaisEtal-21} and Appendix~\ref{app_JC-Ham})
\begin{align}\label{eq_hamsys}
  H_s = \sum_{q=1}^Q \left(\omega_q a_q^\dag a_q - \frac{\xi_q}{2} a_q^{\dagger}a_q^{\dagger}a_q a_q - \sum_{p>q} \xi_{pq}  a_p^{\dagger}a_p a_q^{\dagger} a_q \right).
\end{align}
In the above formulae, $\omega_q$ is the ground state transition frequency and $\xi_q$  is the self-Kerr coefficient of subsystem $q$; the cross-Kerr coefficient between subsystems $p$ and $q$ is denoted $\xi_{pq}$. 
In the above formula, subsystem $q$ is assumed to have $n_q\geq 2$ energy levels, with lowering operator $a_q$. The lowering operator is constructed using Kronecker products,
\begin{align}
  a_q := I_{n_Q} \otimes \dots \otimes I_{n_{q+1}} \otimes  A_q \otimes I_{n_{q-1}} \otimes \dots \otimes I_{n_1} \, \in \mathbb{R}^{N\times N},\quad N= \prod_{q=1}^Q n_q,
  \end{align}
where $I_{n}$ denotes the $n\times n$ identity matrix and the single-system lowering matrix satisfies
\begin{align}
    A_q := \begin{pmatrix}
    0 & \sqrt{1}     \\
     &  \ddots   & \ddots  \\
     &           &  \ddots & \sqrt{n_q-1}  \\
     &           &         & 0   
 \end{pmatrix} \in \mathbb{R}^{n_q \times n_q}.
\end{align}
We consider a control Hamiltonian with real-valued control functions that are parameterized by the control vector $\alphab$,
\begin{align}\label{eq_hamctrl}
   H_c(t; \alphab) = \sum_{q=1}^{Q} f_q(t; \alphab) (a_q + a_q^{\dagger}),\quad
   f_q(t; \alphab) &= 2\,\mbox{Re}\left( d_q(t; \alphab)\, e^{i\omega_{r,q} t}\right).
\end{align}
where $\omega_{r,q}$ is the drive frequency in subsystem $q$.

\subsection{Rotating wave approximation}

To slow down the time scales in the state vector, we apply a rotating frame transformation %\textcolor{red}{add ref} 
in Schr\"odinger's equation through the unitary change of variables $\widetilde{\psib}(t) = R(t)\psib(t)$, where
\begin{align}\label{eq_rot-trans}
    R(t) = \bigotimes_{q=Q}^1\exp{\left(i\omega_{r,q} t\, A_q^\dagger A_q \right)}, 
\end{align}
and $\otimes_{q=Q}^1 C_q = C_Q \otimes C_{Q-1} \otimes \ldots \otimes C_1$. Note that we use $\omega_{r,q}$ as the frequency of rotation in subsystem $q$. The system Hamiltonian transforms into $H^{rw}_s = H_s - \sum \omega_{r,q} a_q^\dagger a_q$. Then, the rotating wave approximation is applied to transform the control Hamiltonian. Here, we substitute the laboratory frame control function $f_q(t;\alphab)$ from \eqref{eq_hamctrl} and neglect terms oscillating with frequencies $\pm 2\omega_{r,q}$. As a result, the Hamiltonians \eqref{eq_hamsys} and \eqref{eq_hamctrl} transform into (see Appendix~\ref{app_RotatingFrame} for details)
\begin{align}\label{eq_hamsysrot}
    H^{rw}_s &= \sum_{q=1}^Q \left(\Delta_q a_q^\dagger a_q -\frac{\xi_q}{2} a_q^{\dagger}a_q^{\dagger}a_q a_q - \sum_{p>q} \xi_{qp}  a_q^{\dagger} a_q a_p^{\dagger}a_p \right),\\
    H^{rw}_c(t;\alphab) &= \sum_{q=1}^Q \left(
    d_q(t; \alphab) a_q + \bar{d}_q(t; \alphab) a_q^{\dagger}\right), \label{eq_hamctrlrot}
\end{align}
where $\Delta_q = \omega_q - \omega_{r,q}$ is called the detuning frequency. The main advantages of the rotating wave approximation are the reduction of the spectral radius in the system Hamiltonian \eqref{eq_hamsysrot}, and the absence of the highly oscillatory factor $\exp(i\omega_{r,q}t)$ in the control Hamiltonian \eqref{eq_hamctrlrot}. In the following we assume that the rotating wave approximation has already been performed, and the tilde on the state vector will be suppressed. We additionally note that the target unitary $V_{tg}$ is similarly transformed into the rotating frame via $V_{tg}^{rw} = R(T)V_{tg}$.

\subsection{Resonant frequencies}\label{sec_resonant}
To simplify the presentation we restrict our analysis to a bipartite quantum system, i.e., $Q=2$. 
While our methodology applies to general Hamiltonians, the main elements of our approach can be described by a model where the system Hamiltonian is diagonal and the control Hamiltonian of the form \eqref{eq_hamctrlrot}. The general case can be handled in an analogous way by first transforming the system Hamiltonian to diagonal form, resulting in a transformed control Hamiltonian as in \eqref{eqn::HamDiagonalized}. An example of the diagonalization approach is presented in Section~\ref{sec::CNOT-Jakarta}.

In the following, we assume the system Hamiltonian to be of the form \eqref{eq_hamsysrot} and denote its elements by 
\begin{align}\label{eq_ham-diag}
    \left. H^{rw}_s \right|_{\jb, \kb} = \begin{cases}
    \kappa_{\jb},\quad & \jb = \kb,\\
    0,&\mbox{otherwise},
    \end{cases}\quad 
    \kappa_{\jb} = \sum_{q=1}^2 \left(\Delta_q j_q -\frac{\xi_q}{2}j_q(j_q-1)\right) - \xi_{12} j_1 j_2,
\end{align}
for $j_q\in [0, n_q-1]$ and where $\jb = (j_2, j_1)$ is a multi-index. 

Let us consider the case when the control functions $d_k(t)$ oscillate with carrier wave frequencies $\{\Omega_1, \Omega_2\}$, and a small amplitude $\epsilon$. These assumptions give
\begin{align}
    H^{rw}_c(t) = \epsilon H_1(t),\quad H_1(t) = 
    \sum_{k=1}^2 \left( 
    e^{i\Omega_k t} a_k + e^{-i\Omega_k t} a_k^\dagger 
    \right),\quad 0<\epsilon\ll 1.
\end{align}
We proceed by identifying the carrier wave frequencies that trigger resonance.
We make an asymptotic expansion of the solution of Schr\"odinger's equation \eqref{eq:schrodinger_vector},
$\psib = \psib^{(0)} + \epsilon \psib^{(1)} + \calo(\epsilon^2)$.
In this setting, resonance corresponds to algebraic growth in time of the amplitude of the first order perturbation.
The zero'th and first order terms satisfy
\begin{alignat}{4}
    \frac{d \psib^{(0)}}{d t} + i H^{rw}_s \psib^{(0)} &= 0,\quad && \psib^{(0)}(0) &= \gb, \label{eq_psi-zero}\\
    \frac{d \psib^{(1)}}{d t} + i H^{rw}_s \psib^{(1)} &= \fb(t),\quad && \psib^{(1)}(0) &= \bm{0}.\label{eq_psi-one}
\end{alignat}
Because the system Hamiltonian is diagonal, \eqref{eq_psi-zero} is a decoupled system of ordinary differential equation that is solved by $\psi^{(0)}_{\jb}(t) = g_{\jb} e^{-i \kappa_{\jb} t}$. The right hand side of \eqref{eq_psi-one} satisfies $\fb(t) := - i H_1(t) \psib^{(0)}(t)$, which can be written
\begin{align}
     \fb(t) = \sum_{k=1}^Q \fb^{(k)}(t),\quad
    \fb^{(k)}(t) = -i\left( 
    e^{i\Omega_k t} a_k + e^{-i\Omega_k t} a_k^\dagger 
    \right) \psib^{(0)}(t).
\end{align}

Because the matrix $H^{rw}_s$ is diagonal, the system for the first order perturbation, \eqref{eq_psi-one}, is also decoupled. We are interested in cases when $\psi^{(1)}_{\jb}(t)$ grows in time, corresponding to resonance. 
Let $\eb_k$ denote the $k^{th}$ unit vector and denote a shifted multi-index by $\jb \pm \eb_1 = (j_2, j_1 \pm 1)$ and $\jb \pm \eb_2 = (j_2 \pm 1, j_1)$.

\begin{lemma}\label{lem_freq}
The perturbation of the state vector, $\psi^{(1)}_{\jb}(t)$, grows linearly in time when the carrier wave frequencies and the initial condition satisfy:
\begin{align}
    \Omega_k &= \kappa_{\jb + \eb_k} - \kappa_{\jb},\quad 
    g_{\jb + \eb_k} \ne 0,\quad j_k \in[0,n_k-2],\label{eq_cond1}\\
\Omega_k &= \kappa_{\jb} - \kappa_{\jb - \eb_k},\quad 
g_{\jb - \eb_k} \ne 0,\quad j_k \in[1,n_k-1],\label{eq_cond2}
\end{align}
for $k=\{1,2\}$.
\end{lemma}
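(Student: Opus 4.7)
The plan is to solve the decoupled scalar ODEs in \eqref{eq_psi-one} component-wise and then inspect the resulting quadrature integrals for phase cancellations that yield linear-in-$t$ growth.

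First I would observe that since $H_s^{rw}$ is diagonal with entries $\kappa_{\jb}$, the $\jb$-th component of \eqref{eq_psi-one} is the scalar ODE
\begin{equation*}
\frac{d \psi^{(1)}_{\jb}}{dt} + i\kappa_{\jb}\,\psi^{(1)}_{\jb} = f_{\jb}(t),\quad \psi^{(1)}_{\jb}(0)=0,
\end{equation*}
whose solution via the integrating factor $e^{i\kappa_{\jb}t}$ is
\begin{equation*}
\psi^{(1)}_{\jb}(t) = e^{-i\kappa_{\jb}t}\int_0^t e^{i\kappa_{\jb}s} f_{\jb}(s)\,ds.
\end{equation*}
Linear growth in $t$ can only occur when the integrand is a constant, i.e., when a complex exponential in $s$ has a vanishing phase. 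So the task reduces to expanding $f_{\jb}(s)$ as a sum of exponentials and reading off the resonance conditions.

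Next I would make $f_{\jb}(t)$ explicit by using the matrix elements $(a_k)_{\jb,\kb} = \sqrt{j_k+1}\,\delta_{\kb,\jb+\eb_k}$ and $(a_k^\dagger)_{\jb,\kb} = \sqrt{j_k}\,\delta_{\kb,\jb-\eb_k}$, together with the closed form $\psi^{(0)}_{\kb}(t) = g_{\kb} e^{-i\kappa_{\kb}t}$ from \eqref{eq_psi-zero}. Applying these to $\fb^{(k)}(t) = -i(e^{i\Omega_k t} a_k + e^{-i\Omega_k t} a_k^\dag)\psib^{(0)}(t)$ gives
\begin{equation*}
f^{(k)}_{\jb}(t) = -i\sqrt{j_k+1}\,g_{\jb+\eb_k}\,e^{i(\Omega_k - \kappa_{\jb+\eb_k})t}
  \;-\; i\sqrt{j_k}\,g_{\jb-\eb_k}\,e^{-i(\Omega_k + \kappa_{\jb-\eb_k})t},
\end{equation*}
where the first term is present only if $j_k\in[0,n_k-2]$ (so that $\jb+\eb_k$ indexes a legitimate level) and the second only if $j_k\in[1,n_k-1]$.

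Finally I would multiply by the integrating factor $e^{i\kappa_{\jb}s}$ and integrate term by term. The first contribution has phase $(\kappa_{\jb}+\Omega_k-\kappa_{\jb+\eb_k})s$, which vanishes precisely when $\Omega_k = \kappa_{\jb+\eb_k}-\kappa_{\jb}$; if additionally $g_{\jb+\eb_k}\ne 0$, the integral contributes a term proportional to $t$, giving \eqref{eq_cond1}. Analogously, the second contribution has phase $(\kappa_{\jb}-\Omega_k-\kappa_{\jb-\eb_k})s$, which vanishes when $\Omega_k = \kappa_{\jb}-\kappa_{\jb-\eb_k}$, and combined with $g_{\jb-\eb_k}\ne 0$ yields \eqref{eq_cond2}. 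Any other exponent in the integrand remains nonzero, so upon integration the corresponding contribution is bounded uniformly in $t$.

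There is no real obstacle here beyond bookkeeping: the main care needed is in tracking the multi-index convention $\jb=(j_2,j_1)$ and the resulting action of $a_k$, $a_k^\dagger$ on basis states, and in verifying that the admissible ranges of $j_k$ in \eqref{eq_cond1}--\eqref{eq_cond2} match exactly the ranges for which the raising/lowering operators produce nonzero matrix elements.
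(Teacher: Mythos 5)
Your proposal is correct and follows essentially the same route as the paper: both reduce to the decoupled scalar ODE for $\psi^{(1)}_{\jb}$, expand the forcing $\fb^{(k)}$ into exponentials with frequencies $\Omega_k-\kappa_{\jb+\eb_k}$ and $-(\Omega_k+\kappa_{\jb-\eb_k})$ and coefficients $-ig_{\jb\pm\eb_k}\sqrt{\cdot}$, and read off the resonance conditions from the vanishing of the net phase. The only difference is presentational: the paper isolates the integrating-factor computation as a separate scalar lemma (Lemma~\ref{lem_resonance}) proved by direct evaluation, whereas you carry out the Duhamel integral inline.
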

\begin{proof}
See Appendix~\ref{app_resonance}.
\end{proof}

We can evaluate the conditions for resonance by inserting the Hamiltonian elements from \eqref{eq_ham-diag} into \eqref{eq_cond1} and \eqref{eq_cond2}.
For $k=1$ and $j_2 \in[0,n_2-1]$, resonance occurs in $\psi^{(1)}_{\jb}(t)$ when
\begin{align}
    \Omega_1 = \begin{cases}
        \Delta_1 - \xi_1 j_1 - \xi_{12} j_2,\quad & g_{\jb+\eb_1}\ne 0,\ j_1 \in[0,n_1-2],\\
        \Delta_1 - \xi_1 (j_1-1) - \xi_{12} j_2,& g_{\jb-\eb_1}\ne 0,\ j_1 \in[1,n_1-1].
    \end{cases}
\end{align}
For $k=2$ and $j_1 \in[0,n_1-1]$, the resonant cases are
\begin{align}
    \Omega_2 = 
    \begin{cases}
        \Delta_2 - \xi_2 j_2 - \xi_{12} j_1,\quad & g_{\jb+\eb_2}\ne 0,\ j_2 \in[0,n_2-2],\\
        \Delta_2 - \xi_2 (j_2-1) - \xi_{12} j_1,\quad & g_{\jb-\eb_2}\ne 0,\ j_2 \in[1,n_2-1].
    \end{cases}
\end{align}
For example, when $n_1=3$, $n_2=3$ and $g_{\jb}\ne 0\ \forall \jb$, the carrier wave frequencies:
\begin{align*}
    \Omega_1 &= \begin{bmatrix}
    \Delta_1,& \Delta_1 - \xi_{12},& \Delta_1 - 2\xi_{12},& \Delta_1 - \xi_1,& \Delta_1 - \xi_1 - \xi_{12},& \Delta_1 - \xi_1 - 2\xi_{12}
    \end{bmatrix},\\
    \Omega_2 &= \begin{bmatrix}
    \Delta_2,& \Delta_2 - \xi_{12},& \Delta_2 - 2\xi_{12},& \Delta_2 - \xi_2,& \Delta_2 - \xi_2 - \xi_{12},& \Delta_2 - \xi_2 - 2\xi_{12}
    \end{bmatrix},
\end{align*}
lead to resonance.

\begin{remark}
The purpose of the above analysis is to identify carrier wave frequencies that trigger resonance, i.e., algebraic growth in time of the amplitude of the first order perturbation. While the asymptotic expansion is only valid in the limit of small amplitudes, it is sufficient for determining the {\em onset} of resonance. Since Schr\"odinger's equation conserves total probability, the linear growth only occurs for short times.
\end{remark}

\section{Quadratic B-splines with carrier waves}\label{sec:B-Splines}
Motivated by the results from the previous section, we parameterize the rotating frame control functions using basis functions that act as envelopes for carrier waves with fixed frequencies:
\begin{align}\label{eq_rotctrl}
  d_k(t;\alphab) &= \sum_{n=1}^{N_f} d_{k,n}(t;\alphab),\quad d_{k,n}(t;\alphab) = \sum_{b=1}^{N_b}\hat{S}_b(t) \alpha^{k}_{b,n}\,e^{it\Omega_{k,n}},\quad k\in[1,Q].
\end{align}
Here, $\Omega_{k,n}\in\mathbb{R}$ is the $n^{th}$ carrier wave frequency for system $k$. These frequencies are chosen to match the resonant frequencies in the system Hamiltonian \eqref{eq_hamsysrot}, as outlined above. 
% \textcolor{red}{We note that the decomposition of a control function into a time-dependent amplitude envelope and a carrier wave is a common tool for understanding state transitions \cite{gerry2005introductory} MOVE TO INTRO?}.
The complex coefficients $\alpha^{k}_{b,n} = \alpha^{k (r)}_{b,n} + i \alpha^{k (i)}_{b,n}$ are control parameters that are to be determined through optimization, corresponding to a total of $D = 2 Q N_b N_f$ real-valued parameters. It is convenient to also define the real-valued functions
\begin{align}\label{eq_rotctrl_pq}
  p_{k,n}(t;\alphab) &= \sum_{b=1}^{N_b}\hat{S}_b(t) \alpha^{k (r)}_{b,n},
  \quad q_{k,n}(t;\alphab) = \sum_{b=1}^{N_b}\hat{S}_b(t) \alpha^{k (i)}_{b,n},
\end{align}
such that $d_{k,n}(t;\alphab) = \left(p_{k,n}(t;\alphab) + i q_{k,n}(t;\alphab)\right)\exp(it\Omega_{k,n})$.

The are several options for choosing the basis functions $\hat{S}_b(t)$ in \eqref{eq_rotctrl}. For example, a Fourier or Chebyshev expansion has orthogonal basis functions and the Fourier basis has compact support in frequency space. However, using either of these bases makes it difficult to construct control functions that begin and end with zero amplitude, which may be desirable in an experimental implementation. Furthermore, the global support in time implies that each coefficient in the Fourier or Chebychev expansion modifies the control function for all times, which may hamper the convergence of the optimization algorithm. An interesting alternative to these classical expansions is provided by semi-orthogonal B-spline wavelets, which can be optimally localized in both time and frequency~\cite{Unser97}. Here, each basis function is a smooth polynomial function of time, where the smoothness depends on the spline order. Polynomial functions are inexpensive to evaluate and lead to a computationally efficient implementation. The basis functions have local support in time, making it easy to construct control functions that begin and end at zero. In this work we use quadratic B-splines, where at most three B-spline coefficients influence the control function at any time $t$.

The B-spline basis functions $\hat{S}_b(t)$ are centered on a uniform grid in time 
(see Figure~\ref{fig_spline_signal}),  
\begin{equation}\label{eq_b-grid}
\tau_b = \Delta\tau(b - 1.5),\quad b=1,\ldots,N_b,\quad \Delta\tau = \frac{T}{N_b-2}.
\end{equation}
The basis function $\hat{S}_b(t)$ has local support for $t\in[\tau_b - 1.5\Delta\tau, \tau_b + 1.5\Delta\tau]$. Thus, for any fixed time $t$ a control function will get contributions from at most three B-spline
wavelets, allowing the control functions to be evaluated very efficiently. 
We also remark that the control function \eqref{eq_rotctrl} can be evaluated at any time $t\in[0,T]$. Importantly, this allows the time-integration scheme to be chosen independently of the parameterization of the control function, and allows the number of control parameters to be chosen independently of the number of time steps for integrating Schr\"odinger's equation.
\begin{figure} 
    \centering
    \includegraphics[width=0.5\linewidth]{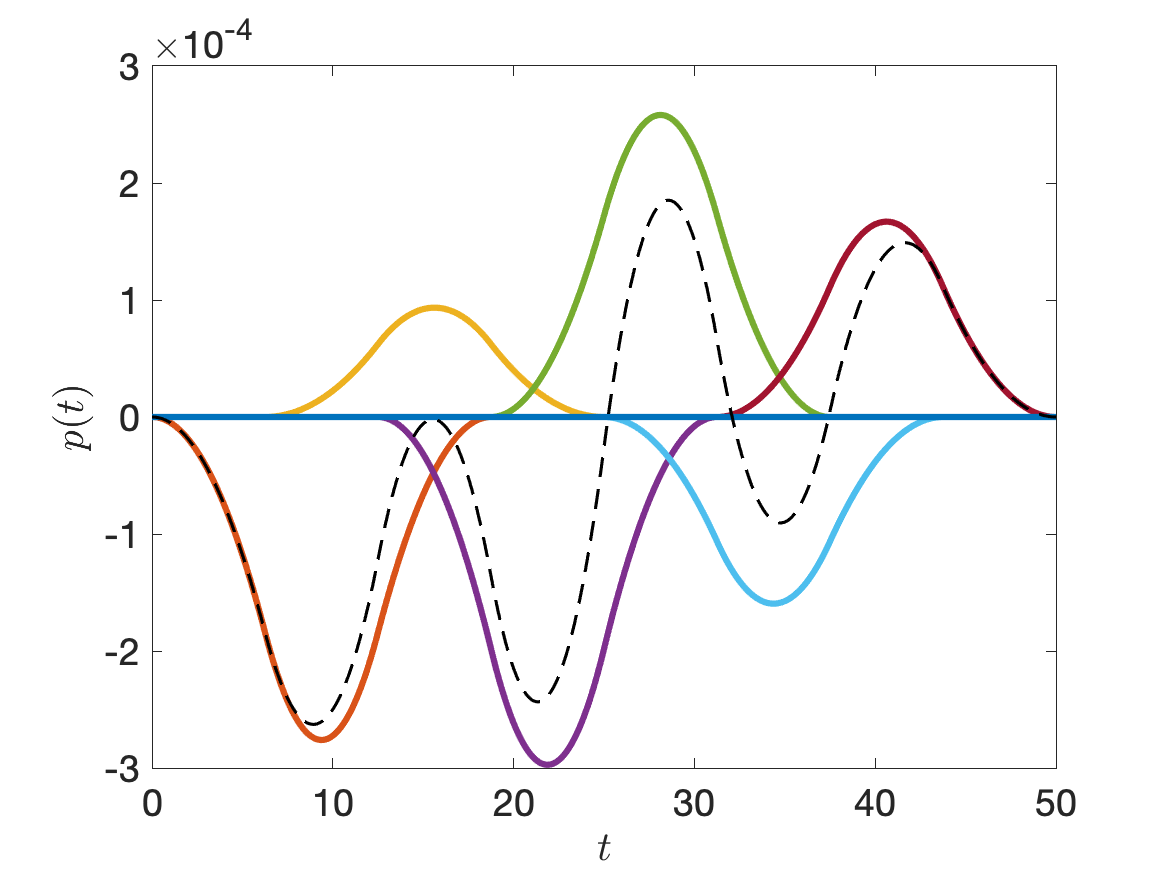}
    \caption{The real part of a quadratic B-spline control function, with zero carrier frequency (dashed black). The solid colored lines are the individual B-spline wavelets.}  \label{fig_spline_signal}
\end{figure} 

\section{Real-valued formulation}\label{sec_real}
There are many accurate numerical schemes for integrating the complex-valued Schr\"odinger equation \eqref{eq:schrodinger_vector} in time. However, by first posing it as a Hamiltonian system in terms of the real and imaginary parts of the state vector, it becomes possible to use a partitioned Runge-Kutta method~\cite{HairerLubichWanner-06}. As will be explained in more detail below, such methods can be twice as efficient for solving Schr\"odinger's equation compared to a regular (non-partitioned) Runge-Kutta scheme.

A real-valued formulation of Schr\"odinger's equation \eqref{eq:schrodinger_vector} is given by
\begin{equation}\label{eq_real-schrodinger}
  \begin{bmatrix}
    \dot{\bm{u}}\\ \dot{\bm{v}}
  \end{bmatrix} =
  \begin{bmatrix}
    S(t) & -K(t) \\ K(t) & S(t)
  \end{bmatrix}
  \begin{bmatrix} \bm{u}\\ \bm{v} \end{bmatrix} =:
  \begin{bmatrix}
    f^u(\bm{u},\bm{v},t)\\
    f^v(\bm{u},\bm{v},t)
  \end{bmatrix},\quad
  \begin{bmatrix}
    \bm{u}(0)\\
    \bm{v}(0)
  \end{bmatrix}
  =
  \begin{bmatrix}
    \bm{g}^u\\
    \bm{g}^v
  \end{bmatrix},
\end{equation}
where,
\[
  \bm{u} = \mbox{Re}(\bm{\psi}),\quad \bm{v} = -\mbox{Im}(\bm{\psi}),\quad 
  K = \mbox{Re}\,(H),\quad S = \mbox{Im}\,(H),
\]
%
%and the dependence on $t$ and $\bm{\alpha}$ is suppressed in all variables.
Because the matrix $H$ is Hermitian, $K^T=K$ and $S^T=-S$ (note that the matrix $S$ is unrelated to
the matrix overlap function $S_V$). The real-valued formulation of Schr\"odinger's equation is a
time-dependent Hamiltonian system corresponding to the Hamiltonian functional,
\begin{equation}\label{eq_hamiltonian}
{\cal H}(\bm{u},\bm{v},t) = \bm{u}^T S(t) \bm{v} + \frac{1}{2} \bm{u}^T K(t)\bm{u} + \frac{1}{2}
\bm{v}^T K(t) \bm{v}.
\end{equation}
In general, $S(t)\ne 0$, which makes the Hamiltonian system non-separable.

In terms of the real-valued formulation, let the columns of the solution operator matrix in \eqref{eq:schrodinger_matrix} satisfy $U = \left[ \bm{u}_1 -i\bm{v}_1,\ \bm{u}_2 -i\bm{v}_2,\ \ldots,
  \bm{u}_E -i\bm{v}_E \right]$. Here, $(\bm{u}_j, \bm{v}_j)$ satisfy \eqref{eq_real-schrodinger}
subject to the initial conditions $\bm{g}^v_{\jb} = \bm{0}$ and $\bm{g}^u_{\jb} = \bm{e}_{\jb}$, where $\jb = (j_Q,j_{Q-1},\dots,j_1)$ is a multi-index
such that $j_q \in \{0,1,\dots,m_q-1\}$ and $m_q$ is
the number of essential levels of subsystem $q$.  The columns
in the target gate matrix $V_{tg}=[\db_1, \ldots, \db_E]$ correspond to
\[
V_{tg} = \left[ \bm{d}_1^u - i
  \bm{d}_1^v,\  \bm{d}_2^u - i \bm{d}_2^v,\ \ldots,  \bm{d}_E^u - i \bm{d}_E^v\right],\quad 
\bm{d}_j^u = \mbox{Re}(\bm{d}_j),\quad \bm{d}_j^v = -\mbox{Im}(\bm{d}_j).
\]
Using the real-valued notation, the two parts of the objective function \eqref{eq:objf-total} can be written
%\begin{align*}\label{eq_objf-total-real}
%  {\cal G}(\bm{\alpha}) =
  %
%  \left(1 - \frac{1}{E^2} \left| S_V( U_T(\bm{\alpha}) ) \right| ^2
%  \right)
  %
%  + \frac{1}{T} \sum_{j=0}^{E-1} \int_{0}^T \Bigl( \langle \bm{u}_j(t,\bm{\alpha}) -i \bm{v}_j(t,\bm{\alpha}), W (\bm{u}_j(t,\bm{\alpha}) - i\bm{v}_j(t,\bm{\alpha})) \rangle_2 \, dt,
%\end{align*}
%
\begin{align}
{\cal J}_1(U_T(\alphab)) &=   \left(1 - \frac{1}{E^2} \left| S_V( U_T(\bm{\alpha}) ) \right| ^2
  \right),\label{eq_st-real-inf}\\
      {\cal J}_2(U(\cdot,\alphab)) &= \frac{1}{T} \sum_{j=1}^{E} \int_{0}^T \left\langle \bm{u}_j(t,\bm{\alpha}) -i \bm{v}_j(t,\bm{\alpha}), W (\bm{u}_j(t,\bm{\alpha}) - i\bm{v}_j(t,\bm{\alpha})) \right\rangle_2 \, dt,\label{eq_st-real-leak}
\end{align}
where
\begin{align}
  S_V( U_T ) &=
  \sum_{j=1}^{E} \left\langle \bm{u}_j(T,\bm{\alpha}) - i  \bm{v}_j(T,\bm{\alpha}), \bm{d}^u_j -i \bm{d}^v_j \right\rangle_2.
\end{align}

\subsection{Time integration}\label{sec_time-stepping}
Let $t_n= nh$, for $n=0,1,\ldots,M$, be a uniform grid in time where $h=T/M$ is the time step. Also let $\bm{u}^n\approx \bm{u}(t_n)$ and $\bm{v}^n\approx \bm{v}(t_n)$ denote the numerical solution on the grid.
We use the St\"ormer-Verlet (S-V) scheme to discretize the real-valued formulation of Schr\"odinger's equation. It is a partitioned Runge-Kutta (PRK) scheme that is symplectic, reversible, and second order accurate~\cite{HairerLubichWanner-06}. Starting from the initial conditions $\bm{u}^0 = \bm{g}^u$ and $\bm{v}^0 = \bm{g}^v$, the S-V scheme combines the trapezoidal and the implicit midpoint rules,
\begin{align}
    \bm{v}^{n+1/2} &= \bm{v}^n + \frac{h}{2}\left(
    K_{n+1/2} \bm{u}^n + S_{n+1/2} \bm{v}^{n+1/2}
    \right),\\
    \bm{u}^{n+1} &= \bm{u}^n + \frac{h}{2}\left( S_n \bm{u}^n + S_{n+1} \bm{u}^{n+1} - \left(K_n + K_{n+1}\right) \bm{v}^{n+1/2}\right),\\
    \bm{v}^{n+1} &= \bm{v}^{n+1/2} + \frac{h}{2}\left(K_{n+1/2}\bm{u}^{n+1} + S_{n+1/2}\bm{v}^{n+1/2}\right),
\end{align}
for $n=1,2,\ldots,M-1$.

To evaluate the computational efficiency of the S-V method, we compare it to the Implicit Midpoint Rule (IMR), which is a non-partitioned Runge-Kutta method that also is symplectic, reversible, and second order accurate. Because the matrix $S(t)\ne 0$ in \eqref{eq_real-schrodinger}, the S-V method requires two linear systems to be solved per time-step. The linear systems are diagonally dominant and can be solved using a Neumann iteration. The computational cost of each time step is dominated by solving these linear systems, which are of size $N\times N$ with band width $b$. The number of real-valued operations for evaluating the matrix-vector products is ${\cal O}(bN)$. Because linear systems must be solved to calculate $\bm{v}^{n+1/2}$ and $\bm{u}^{n+1}$ in the S-V method, the total cost of taking one time step is ${\cal O}((5 + 2 N_{iter}) b N )$, where $N_{iter}$ is the number of iterations needed to converge the solution (often $N_{iter} \leq 5$). If, instead, we would have used the non-partitioned IMR time integrator, the corresponding real-valued linear system would consist of 4 blocks, each of size $N\times N$. Each block is banded with the same band width as before. This system is also diagonally dominant and can be solved by the same iterative scheme. The cost to evaluating these matrix vector products is ${\cal O}((4 + 4 N_{iter}) b N)$. Note that the cost of solving the corresponding complex-valued system with the IMR method is also of this order because multiplying two complex numbers requires 4 real-valued multiplications. The upshot is that the S-V method is about twice as efficient as the IMR method.

\subsection{Time step restrictions for accuracy and stability}\label{sec_time-step-est}
For simplicity, in the following we consider a single qubit system ($Q=1$) and note that the analysis can be extended to multiple systems in a straightforward manner. The accuracy in the numerical solution of Schr\"odinger's equation is essentially determined by resolving each of two fundamental time scales on the grid in time. The first time scale corresponds to the resonance frequencies in the control functions for triggering desired transitions between energy levels in the quantum system (as discussed in Section~\ref{sec_resonant}). 
In the Hamiltonian model \eqref{eq_hamsysrot} and \eqref{eq_hamctrlrot}, the angular transition frequencies between the essential energy levels (with detuning frequency $\Delta_1$) are
\begin{align*}
    \Omega_{1,n} = \Delta_1 - n\xi_1,\quad n=0,\ldots,N_f-1.
\end{align*}

The second time scale is due to the harmonic oscillation of the phase in the state vector. Recall that the Hamiltonian $H \in \mathbb{C}^{N\times N}$ is Hermitian
so that if $\lambda$ is a real eigenvalue of $H$ then
the system matrix $-iH$ has the eigenvalue
$-i\lambda$. Thus, the harmonic oscillation corresponding
to the eigenvalue of largest magnitude of $H$ gives the shortest
period. A straightforward bound on the eigenvalue of largest magnitude of the Hamiltonian \eqref{eq_hamsysrot} and \eqref{eq_hamctrlrot}
can be obtained using Gershgorin's circle theorem~\cite{Golub-VanLoan},
\begin{align*}
    \rho(H)
    \leq \frac{|\xi_1|}{2}(N-1)(N-2) + 2\,d_\infty \sqrt{N-1} \equiv \rho_\text{max}.
\end{align*}
Here we have used that the control function is bounded by 
$d_\infty = \max_t |d_1(t,\alphab)|$
for a given parameter vector $\bm{\alpha}$, in the interval $0\leq t \leq T$. To resolve the shortest period in the solution of Schr\"odinger's 
equation by at least $C_P$ time steps requires
\begin{align}\label{eq_timestep}
h \leq \frac{2\pi}{C_P \max\{\rho_\text{max}, \max_{n}(|\Omega_{1,n}|)\}}.
\end{align}

The value of $C_P$ that is needed to obtain a given accuracy in the numerical solution depends on the order of accuracy, the duration of the time integration, as well as the details of the time-stepping scheme. For second order accurate methods such as the St\"ormer-Verlet method,
acceptable accuracy for engineering applications can often achieved with $C_P\approx 40$. With the St\"ormer-Verlet method, we note that the time-stepping can become unstable if $C_P\leq 2$, corresponding to a sampling rate below the Nyquist limit.

\section{Discretizing the objective function and its gradient}\label{sec_disc}
As discussed in the introduction, a powerful and efficient method to 
compute the gradient of the objective function is through the adjoint-state
method. To that end, there are two approaches to compute the 
adjoint equation. The first is the ``first-optimize-then-discretize''
approach, in which the continuous adjoint equation is derived and 
then discretized independently of the forward (state) equation. 
The second is the ``first-discretize-then-optimize'' approach, in which a discretization is chosen for both the state equation and the objective function. With the chosen discretization, the Karush-Kuhn-Tucker (KKT) conditions \cite{kuhn2014nonlinear} applied to a discrete Lagrangian yield an appropriate adjoint discretization scheme, which provides an \textit{exact} discrete gradient. Here, we follow the ``first-discretize-then-optimize" approach, as the former may yield inconsistent gradients. In the following we present a brief summary of the results of this approach. For the reader interested in the technical details of the derivation of the discrete adjoint scheme, we refer to~\cite{petersson2020discrete}.

Recall that the St\"ormer-Verlet scheme applied to the real-valued Schr\"odinger equation \eqref{eq_real-schrodinger}
is a partitioned Runge Kutta (PRK) scheme, essentially combining the trapezoidal and midpoint rules for the real and imaginary parts of the state vector, respectively. For the discretization of the guard level integral term in \eqref{eq_st-real-leak}, we use the corresponding trapezoidal and midpoint rules as follows:
%\textcolor{red}{(Update if the general PRK stage variable formulation in Section 4.1 is replaced by specific formulas for the S-V scheme.}
\begin{align*}\label{eq:cost_guardLevel}
  {\cal J}_{2}^h(\bm{u}, \bm{v}) = \frac{h}{2T}\sum_{j=1}^{E}
    \sum_{n=0}^{M-1}
    \left(  \left \langle \bm{u}^{n}_j, W \bm{u}^{n}_j \right \rangle
    +   \left \langle \bm{u}^{n+1}_j, W \bm{u}^{n+1}_j \right \rangle
    +  2\left \langle \bm{v}^{n+1/2}_j, W \bm{v}^{n+1/2}_j \right \rangle
    \right).
\end{align*}
The above expression assumes that $W\in \mathbb{R}^{N\times N}$. The general case of a complex-valued Hermitian $W$ follows from straightforward algebra.
The superscript $h$ on ${\cal J}_2^h$ denotes the discretized objective function. We note that the trace infidelity term $\mathcal{J}_1$ in \eqref{eq_st-real-inf}, is a terminal condition with the discrete analogue $\mathcal{J}_1^h = \mathcal{J}_1$. Thus we have the discrete version of the total objective function \eqref{eq:objf-total} as
${\cal G}^h(\alphab) = {\cal J}_1^h + {\cal J}_2^h$. With these choices, a ``first-discretize-then-optimize" approach leads to a partitioned Runge-Kutta (PRK) scheme that is a consistent approximation of the continuous adjoint equation. The adjoint PRK scheme is closely related to the St\"ormer-Verlet scheme. However, the roles of the trapezoidal and midpoint rules are swapped. For example, the state variables $\ub$ are evolved with the trapezoidal rule in St\"ormer-Verlet, but the corresponding multiplier variables are evolved with the midpoint rule in the adjoint PRK scheme. In addition, the
time-dependent matrices for the adjoint scheme are evaluated at slightly different time-levels, see~\cite{petersson2020discrete} for further details.

\section{Numerical optimization}\label{sec_numopt}

Our numerical solution of the optimal control problem is based on the general purpose
optimization package IPOPT~\cite{Wachter2006}. This open-source library implements a primal-dual barrier approach for solving large-scale nonlinear programming problems, i.e., it minimizes an objective function subject to inequality (barrier) constraints on the parameter vector. Since the
Hessian of the objective function is costly to calculate, we use the L-BFGS algorithm~\cite{Nocedal-Wright} in IPOPT, which only relies on the objective function and its gradient to be evaluated. Inequality constraints that limit the amplitude of the parameter vector
$\bm{\alpha}$ are enforced internally by IPOPT.

When solving an optimal control problem, the goal is to find a control vector that simultaneously gives a small gate infidelity and a small leakage. Because we minimize their sum, it is conceivable that the optimization algorithm would find an optima where only one of the terms is small. In practice this can occur when the Hilbert space is truncated too aggressively, leading to high leakage. This problem can be circumvented by inflating the guarded subspace. It can also happen that the optimizer converges to a solution where the infidelity is large. This behavior indicates that the bounds on the control vector are set too tight for the given gate duration. Increasing the bounds, or the gate duration, have been found to be effective ways of resolving that situation.

The routines for evaluating the objective function and its gradient are implemented in the Julia programming language~\cite{julia}, which provides a convenient interface to IPOPT. 
Given a parameter vector $\bm{\alpha}$, the routine for evaluating the objective function solves the Schr\"odinger equation with the St\"ormer-Verlet scheme and evaluates the objective function ${\cal G}^h(\bm{\alpha})$ by accumulation. The routine for evaluating the gradient first applies the St\"ormer-Verlet scheme to calculate terminal conditions for the state variables. It then proceeds by accumulating the gradient $\nabla_\alpha{\cal G}^h$ by simultaneous reversed time-stepping of the discrete adjoint scheme and the St\"ormer-Verlet scheme.  These two fundamental routines, together with many support functions 
%for setting up the Hamiltonians, estimating the time step, setting up constraints on the parameter vector, post-processing and plotting of the results have been 
are implemented in the software package Juqbox.jl~\cite{Juqbox-software}. This package was used to generate the numerical results below.

\subsection{A CNOT gate on two qudits with guard levels}\label{sec::CNOT-Jakarta}
To test our methods on a realistic quantum optimal control problem, we consider a CNOT gate on two transmon qubits coupled by a resonator bus. Magesan and Gambetta~\cite{MagGam-20} derived an effective Hamiltonian where the resonator modes are adiabatically eliminated. This model is used to describe the superconducting systems provided by IBM Quantum Experience~\cite{IBMQ-21}. 
In the rotating frame, the Hamiltonian $H^{rw}(t) = H_s^{rw} + H_c^{rw}(t)$ satisfies
\begin{align}
    H_s^{rw} &= \sum_{q=1}^2 \left(\Delta_q a_q^\dag a_q - \frac{\xi_q}{2}a_q^\dag a_q^\dag a_q a_q \right) + J(a_1 a_2^\dag + a_1^\dag a_2),\\
    H_c^{rw}(t) &= \sum_{q=1}^2 \left(d_q(t) a_q  + \bar d_q(t) a_q^\dag\right).
\end{align}
Here, we have assumed that the frequency of rotation is the same in both subsystems, making the coupling term $(a_1 a_2^\dagger + a_1^\dagger a_2)$ invariant under the rotating frame transformation. In the following, the frequency of rotation is chosen as the average of each qubit's frequency, $\omega_r = (\omega_1 + \omega_2)/2$, resulting in the detuning frequencies $\Delta_1 = -\Delta_2 = (\omega_2-\omega_1)/2$. We use representative parameters for qubits \#3 and \#5 from the {\tt ibmq$\_$jakarta} system, with qubit frequencies $\omega_1/2\pi = 5.17839$ GHz and $\omega_2/2\pi = 5.06323$ GHz. The self-Kerr coefficients are $\xi_1/2\pi = 0.3411$ GHz and $\xi_2/2\pi = 0.3413$ GHz, and the coupling coefficient is $J/2\pi = 1.995 \cdot 10^{-3}$ GHz. Two guard levels are added to the two essential levels in each qubit, corresponding to a 16-dimensional Hilbert space for the state vector of the coupled system.

Because the system Hamiltonian is {\em not} diagonal, its resonant frequencies are determined after diagonalization, $\widetilde{H}_s = V^\dagger H_s^{rw} V$. Here we calculate the transformation numerically, but we could alternatively have utilized an approximate diagonalization, e.g., a Schrieffer-Wolff~\cite{SW-Bravyi-11, SW-original} expansion. The transition frequencies in the system follow as differences between the eigenvalues of $H_s^{rw}$, see Section~\ref{sec_resonant}. Because the coupling coefficient $J$ is small compared to the detuning frequencies, the non-diagonal part of the Hamiltonian only imposes a small perturbation on the eigenvalues. As a result, the ``dressed" detuning frequencies $\tilde{\Delta}_{1}/2\pi \approx 5.7611\cdot 10^{-2}$ GHz and $\tilde{\Delta}_{2} = - \tilde{\Delta}_{1}$ are very close to $\Delta_1/2\pi =  5.7580\cdot 10^{-2}$ GHz and $\Delta_2 = -\Delta_1$. A more significant outcome of the unitary transformation is that the control Hamiltonian in the first system becomes $V^\dagger a_1 V \approx a_1 + \epsilon \tilde{a}_2$, where $\epsilon\approx 0.01$ and $\tilde{a}_2$ has the same non-zero structure as $a_2$. Thus, in addition to triggering resonance in the first system, it can also trigger resonance in the second system (cross-resonance). We therefore apply two carrier wave frequencies in the first control function, $\Omega_{1,1} = \tilde{\Delta}_1$ and $\Omega_{1,2} = \tilde{\Delta}_2$. The transformation of the second control Hamiltonian leads to a corresponding non-zero structure and we apply the same carrier wave frequencies in that control function. Note that the purpose of the diagonalization is to determine the carrier wave frequencies that trigger resonance between the essential states in the system. Once they are determined, we can solve Schr\"odinger's equation using the original Hamiltonians.

To penalize occupation of the forbidden states that span the guarded subspace of the Hilbert space, we start by defining index ranges for the state vector. There are 4 levels in each subsystem; two essential and two guard levels. In this case, the Hilbert space is spanned by the basis states $\mathcal{I} = \{ |jk\rangle \}_{j,k = 0}^3$ (here, $|jk\rangle = \eb_j \otimes \eb_k$ where $\eb_j\in \mathbb{R}^4$ is a canonical unit vector).
The essential subspace in the coupled system is spanned by the basis states $\mathcal{E} = \{ |00\rangle, |01\rangle, |10\rangle, |11\rangle\}$ and
we span the guarded subspace by the complimentary set of states $\mathcal{F} = \mathcal{I}\setminus\mathcal{E}$. The guarded subspace is further partitioned it into  a lower guarded subspace, spanned by the states in the set $\mathcal{F}_1 = \{ |02\rangle, |12\rangle, |20\rangle, |21\rangle \}$
%\{2,6,8,9,10\})$ AP: should probably be {2,6,8,9} ???
and a higher guarded subspace, $\mathcal{F}_2 = \mathcal{F}\setminus\mathcal{F}_1$. We discourage population of forbidden states by constructing the weight matrix $W$ as in \eqref{eqn::GeneralGuard}, using the lower forbidden states $\psi_{f,k} \in \mathcal{F}_1$ with weights $\gamma_k = 1/7000$, and the higher forbidden states $\psi_{f,k}\in \mathcal{F}_2$ with weights $\gamma_k = 1/7$.

The control functions for each carrier wave are parameterized by B-splines with $D_1=25$ basis functions resulting in a total of $D = 200$ parameters. 
% Additionally, we use the convergence stopping criteria $\mathcal{J}_1^h \leq 10^{-4}$ as an acceptable gate infidelity, corresponding to a gate fidelity exceeding $99.99\%$. 
The amplitudes of the control functions are limited by the constraint
\begin{equation}\label{eq_amp-const}
  \|\bm{\alpha}\|_\infty := \max_{1\leq r\leq D}|\alpha_r|\leq \alpha_{max},
\end{equation}
where we have set $\alpha_{max} = 40$ MHz and enforce the control functions to begin and end at zero. We set the gate duration to $T=250$ ns and estimate the time step using the technique in Section~\ref{sec_time-step-est}. To guarantee at least $C_P=40$ time
steps per period, we use $M=20,750$ time steps, corresponding to $h\approx 1.205\cdot 10^{-2}$ ns. As initial guess for the elements of the parameter vector $\alphab$, we use a random number generator with a uniform distribution in $[-\alpha_{max}/100, \alpha_{max}/100]$. Additionally, we set the tolerance for the overall NLP error (see Equation (5) of \cite{Wachter2006}) to $5 \cdot 10^{-2}$.

In Figure~\ref{fig_ipopt-convergence} we present
the convergence history of the optimization. We show the objective function ${\cal G}$, decomposed into ${\cal J}_{1}^h$ and ${\cal J}_{2}^h$, together with the norm of the dual infeasibility, $\|\nabla_\alpha{\cal G} - z\|_\infty$ (used by IPOPT to monitor convergence, see~\cite{Wachter2006} for details). 
\begin{figure}
  \centering
    \begin{subfigure}{0.43\textwidth}
    \begin{center}
    %\vspace{2mm}
    \includegraphics[width=1.0\linewidth]{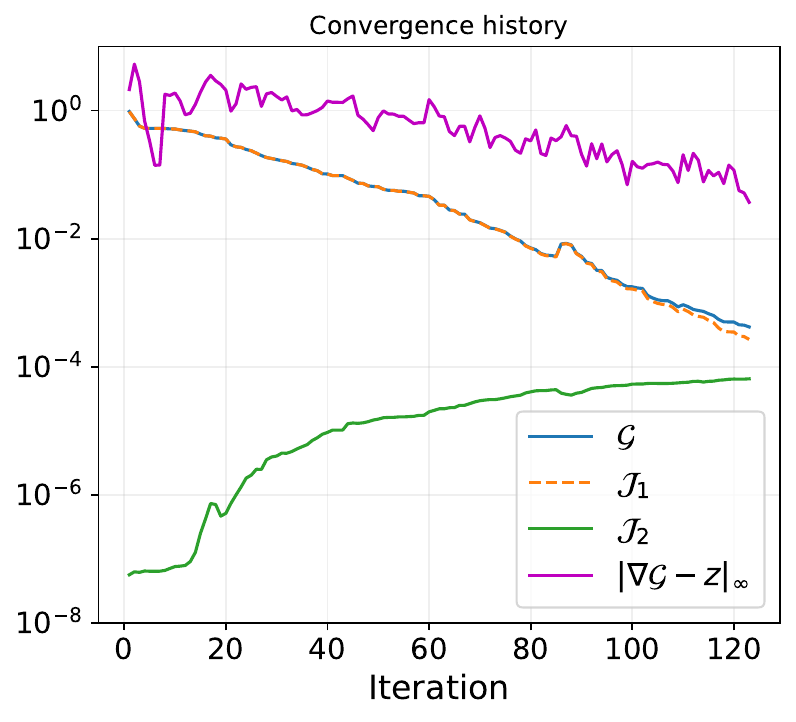}\hspace{1mm}
    \caption{IPOPT convergence history.}\label{fig_ipopt-convergence}
    \end{center}
    \end{subfigure}
    \begin{subfigure}{0.56\textwidth}
    \begin{center}
    %\vspace{3mm}
    \includegraphics[width=1.0\linewidth]{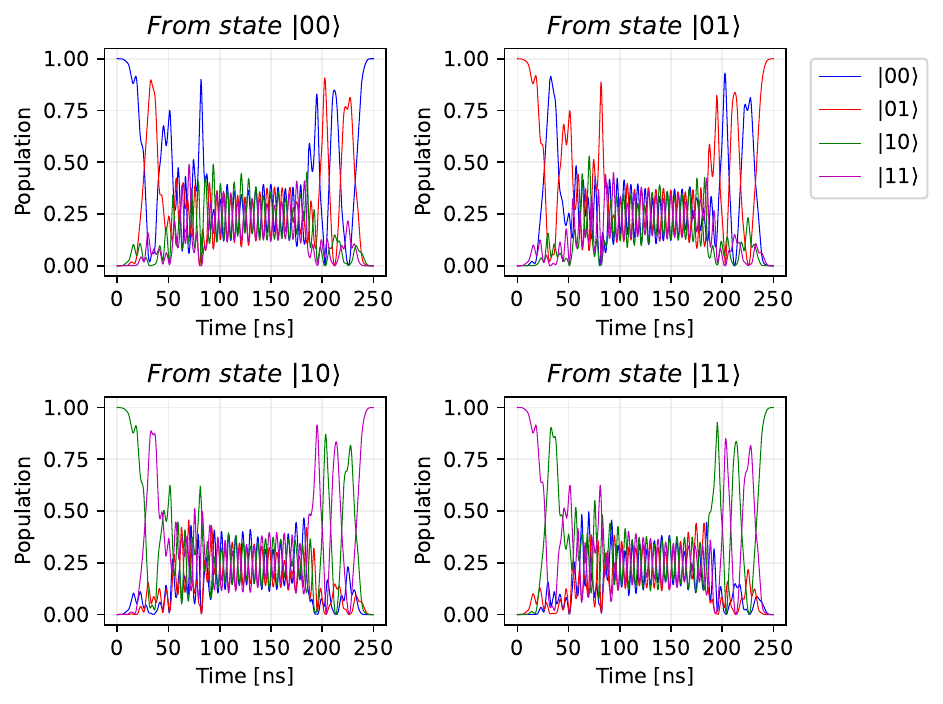}
    \caption{Population of the essential states.}\label{fig:CNOT-prob}
    \end{center}
    \end{subfigure}
  \caption{The CNOT gate on two qudits.}
\end{figure}
%We see from Figure~\ref{fig_ipopt-convergence} that 
In this case, IPOPT converges to a solution with gate infidelity ${\cal J}_{1}^h\approx 2.68\cdot10^{-4}$ in 122 iterations. Moreover, ${\cal J}_{2}^h \approx 6.50\cdot10^{-5}$ and the population of the forbidden states remains small with a maximum population below $5.60\cdot 10^{-4}$ for all times and initial conditions. The evolution of the population of the essential states during the CNOT gate are presented in Figure~\ref{fig:CNOT-prob} and the optimized control functions are shown in Figure~\ref{fig:CNOT-ctrl}.
\begin{figure}
  \centering
  \includegraphics[width=0.75\textwidth]{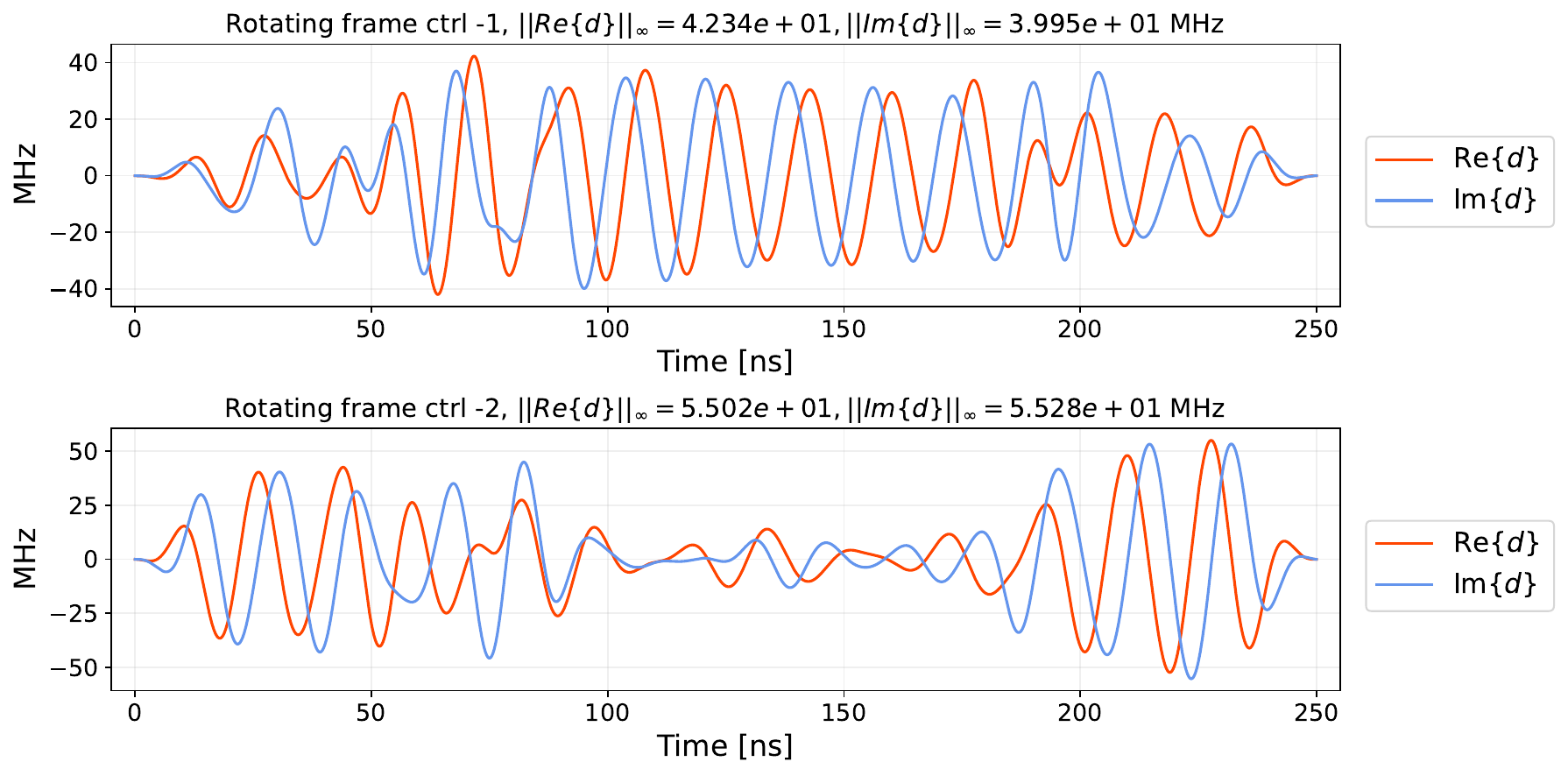}
  \caption{The control functions (in the rotating frame) for qubit $k=1$ (top), and qubit $k=2$ (bottom), for realizing a CNOT gate with a fidelity exceeding 99.9\%. The control function $d(t)$ is defined in \eqref{eq_rotctrl}. %\textcolor{red}{Correction: The two ctrl functions correspond to different carrier wave frequencies?} 
    %Each control function uses $D_1=14$ basis function per 
    %carrier wave with three carrier wave frequencies.
    }
  \label{fig:CNOT-ctrl}
\end{figure}

\subsection{Risk-neutral controls}
In practice the entries of the Hamiltonian may have some 
uncertainty, especially for higher energy levels, and it is 
desirable to design control pulses that are more robust 
%\textcolor{red}{a single qudit with three essential energy levels and???}
to noise. There are several ways to design noise resilient controls, including robust optimization methods in which a min-max problem is solved~\cite{ge2020robust}, sampling-based learning approaches for risk-neutral optimization \cite{chen2014sampling}, or risk-neutral/averse optimization to minimize the expectation of a utility function based on the original objective function subject to uncertain parameters~\cite{Wu-robust,ge2021risk}. 
In the following we consider a risk-neutral optimization approach, which
can be interpreted as designing a control to drive an ensemble of related quantum systems to 
the same final state \cite{Li-robust,li2009ensemble}.

In this section we consider a risk-neutral strategy to design a $|0\rangle \leftrightarrow |2\rangle$ SWAP gate on a single qubit ($Q=1$), with three essential levels and one guard level. Let $\epsilon \sim \text{Unif}(- \epsilon_\text{max}, \epsilon_\text{max})$ be a uniform random variable for some $\epsilon_\text{max} > 0$. As a simple example, we consider the uncertain system Hamiltonian $H^{u}_s(\epsilon) = H^{rw}_s + H'(\epsilon)$ where $H^{rw}_s$ is given by~\eqref{eq_hamsysrot}, and $H'(\epsilon)$ is a diagonal perturbation:
% \iffalse
%   \begin{align*}
%     H(t) = \tilde H_0(\epsilon) + p(t) (a+ a^\dag) + iq(t) (a - a^\dag),
%   \end{align*}
% where $\tilde H_0(\epsilon)$ is the diagonal system Hamiltonian
%   \begin{align*}
%     \frac{\tilde H_0(\epsilon)}{2\pi} = \begin{pmatrix}
%       0 &  &  &  \\
%       & \epsilon/100 &  &  \\
%       &  & -\xi_1 + \epsilon/10 &  \\
%       &  &  & -3 \xi_1 + \epsilon
%     \end{pmatrix}.
%   \end{align*}
% \fi 
\begin{align*}
    \frac{H'(\epsilon)}{2\pi} = \begin{pmatrix}
      0 &  &  &  \\
       & \epsilon/100 &  &  \\
       &  & \epsilon/10 &  \\
       &  &  & \epsilon
    \end{pmatrix}.
\end{align*}
Here, no perturbation is imposed on the control Hamiltonian \eqref{eq_hamctrlrot}. From these assumptions follow that the uncertain system Hamiltonian has expectation
$\mathbb{E}[H^{u}_s(\epsilon)] = H^{rw}_s$. We may correspondingly
update the original objective function, $\mathcal{G}(\bm{\alpha},H^{rw}_s)$, to the risk-neutral utility function $\widetilde{\mathcal{G}}(\bm{\alpha}) = \mathbb{E}[\mathcal{G}(\bm{\alpha}, H^u_s(\epsilon))]$. Given the simple form of the random variable $\epsilon$, we may compute $\widetilde{\mathcal{G}}$ by quadrature:
  \begin{align}\label{eqn::RiskNeutralObj}
    \mathbb{E}[\mathcal{G}(\bm{\alpha},H^u_s (\epsilon))] 
    = \int_{-\epsilon_\text{max}}^{ \epsilon_\text{max}}
       \mathcal{G}(\bm{\alpha}, H^u_s(\epsilon)) \, d\epsilon
       \approx
       \sum_{k = 1}^M w_k \mathcal{G}(\bm{\alpha}, H^u_s(\epsilon_k) ),
  \end{align}
where $w_k$ and $\epsilon_k$ are the weights and collocation points
of a quadrature rule. 

For the following example, we compare the optimal control obtained using the standard optimization procedure (no noise) and a risk-neutral control, in which the utility function \eqref{eqn::RiskNeutralObj} is computed using the Gauss-Legendre quadrature with $N = 9$ collocation points and 
$\epsilon_\text{max} = 10$ MHz. We set the gate duration to $T = 300$ ns, the maximum allowable amplitude to $\alpha_\text{max}/2\pi = 12$ MHz, the fundamental frequency to $\omega_1/2\pi = 4.10336$ GHz, with detuning frequency $\Delta_1=0$, and the self-Kerr coefficient to $\xi_1/2\pi = 0.2198$ GHz. 

The control functions are constructed using two carrier waves with frequencies $\Omega_{1,1} = 0$ and $\Omega_{1,2} = -\xi_1$ for both the ``noise-free" (NF) and ``risk-neutral" (RN) cases. In each case we use $D_1 = 12$ splines per control and carrier wave frequency for a total of $D = 48$ splines. We additionally constrain the controls to start and end at zero. We set the tolerance for L-BFGS to $10^{-5}$, the maximum iteration count to $150$, and use a maximum of five previous iterates to approximate the Hessian at each iteration. For the noise-free and risk-neutral optimized control functions, we use the perturbed Hamiltonian $H^u_s(\epsilon)$ to evaluate the objective function $\mathcal{G}$, for evenly spaced $\epsilon$ in the range $[-30,30]$ MHz. The results are shown in Figure~\ref{fig:risk-neutral}.
\begin{figure}%[H]
  \centering
  \includegraphics[width=0.5\textwidth]{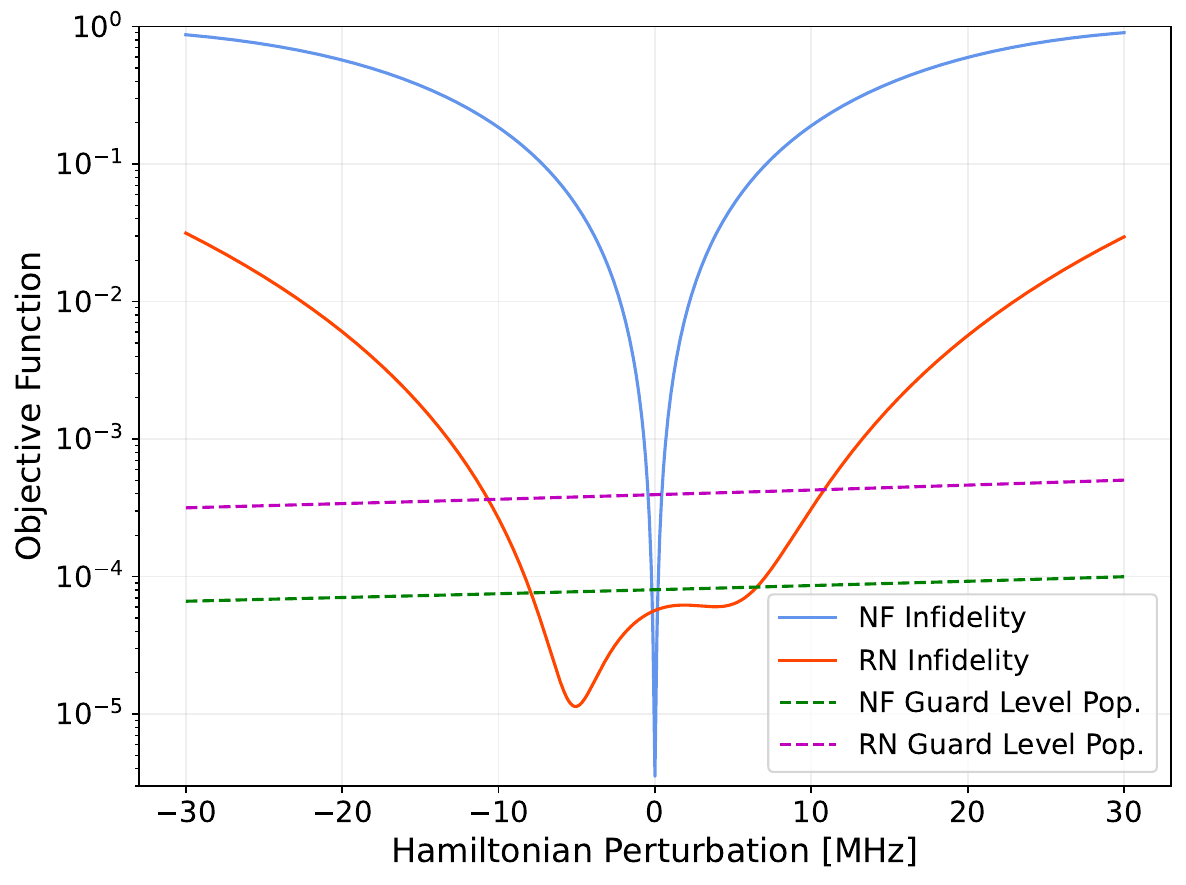}
  \caption{Infidelity objective (${\cal J}_1$) and guard level objective (${\cal J}_2$) as function of $\varepsilon$ in $H^u_s(\epsilon)$. 
  Here `NF' and `RN' correspond to the ``Noise-Free" and ``Risk-Neutral" cases.}
  \label{fig:risk-neutral}
\end{figure}
From Figure~\ref{fig:risk-neutral} we note that the optimal control corresponding to the noise-free approach obtains the smallest infidelity for $\epsilon = 0$, but it grows rapidly for $|\epsilon|>0$. By comparison, the optimal control found with the risk-neutral approach is much less sensitive to noise. 
\begin{figure}%[H]
  \centering
  \includegraphics[width=0.6\textwidth]{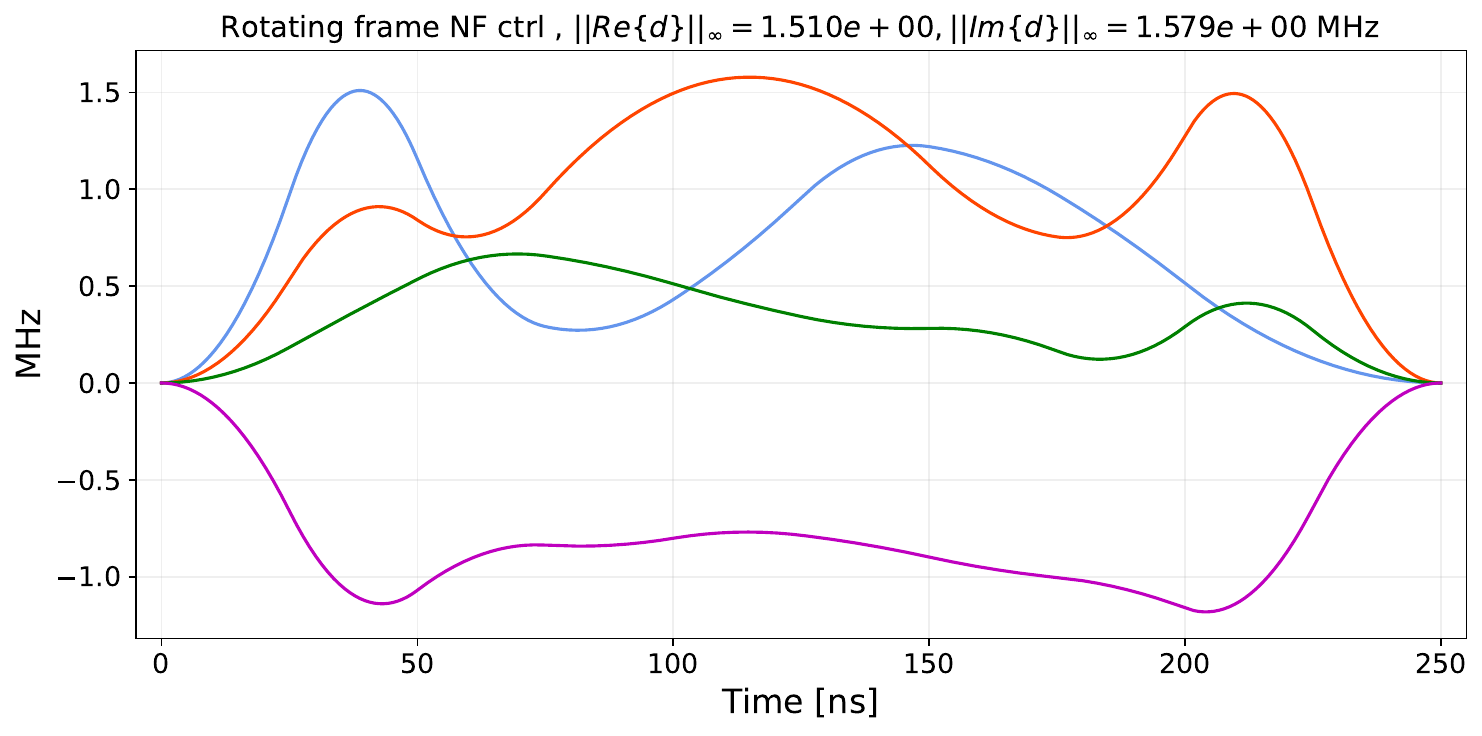}\\
  \includegraphics[width=0.6\textwidth]{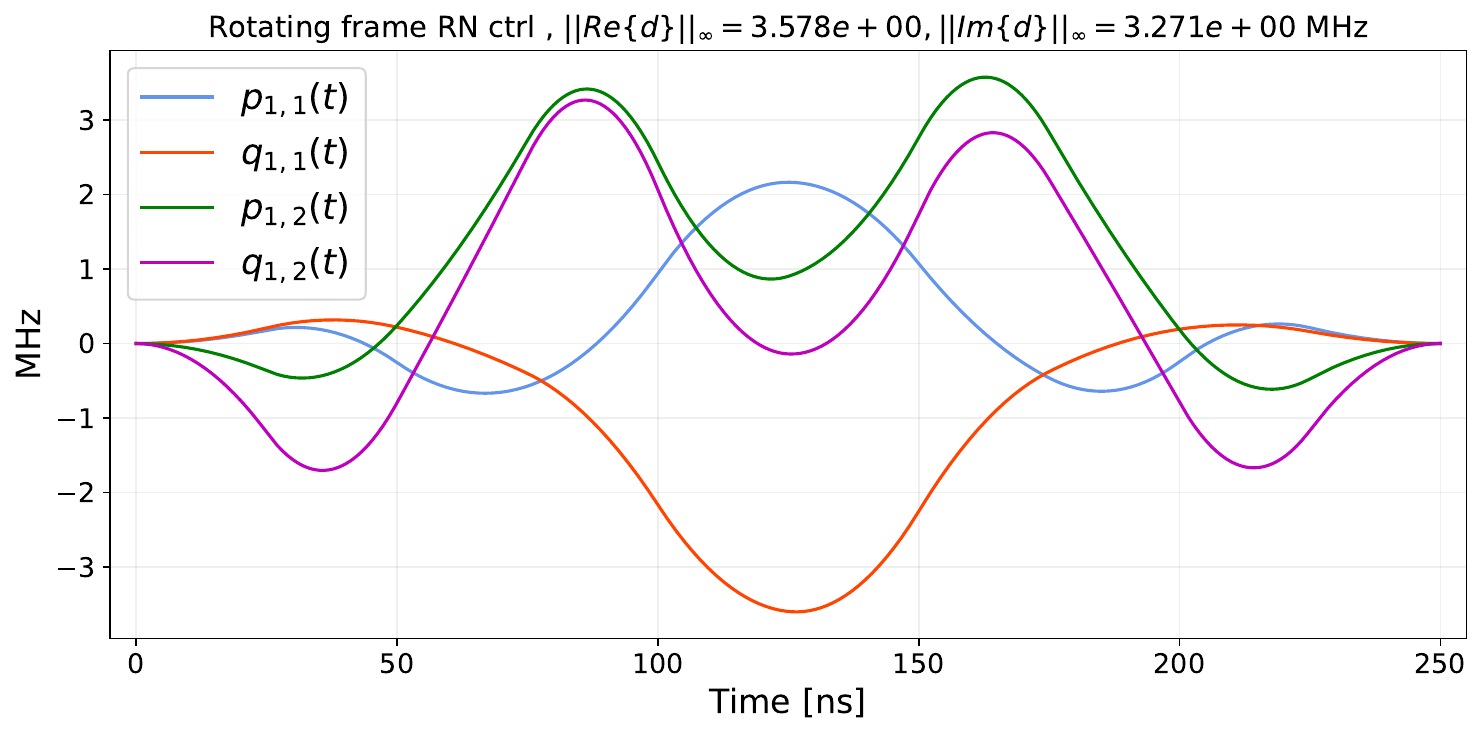}
  \caption{Control functions (without carrier waves) for the cases: ``noise-free" (top), and ``risk-neutral" (bottom). Here, $p_{k,n}(t)$ and $q_{k,n}(t)$ are defined in \eqref{eq_rotctrl_pq}.}
  \label{fig:RN-ctrls}
\end{figure}
We plot the control functions for both cases in Figure~\ref{fig:RN-ctrls}. Note that the risk-neutral controls (bottom panel) have larger amplitudes compared to the noise-free controls (top panel), indicating a potential drawback of the risk-neutral approach. However, a more systematic study of this issue is needed and left for future work.

\section{Code comparison}\label{sec_compare}
%\textcolor{red}{Update and condense}
%The Juqbox code implements the proposed algorithm in the Julia language.

Here we compare the proposed method, as implemented in the Juqbox.jl package~\cite{Juqbox-software}, with two Python implementations of the GRAPE algorithm: the pulse\_optim method from the QuTiP \cite{qutip} framework and the Grape-TF code (TF is short for TensorFlow~\cite{abadi2016tensorflow}). The latter code provides an enhanced implementation of the GRAPE algorithm, as described by Leung et al.~\cite{Leung-2017}. It is callable from QuTiP and shares a similar problem setup with the pulse\_optim function.

As a test problem, we consider optimizing the control functions for a set of SWAP gates on a single qudit that transforms $|0\rangle \leftrightarrow |d\rangle$. The corresponding target gate transformation $V_E$ equals the permutation matrix that swaps columns $0$ and $d$ in a $(d+1)\times(d+1)$ matrix.
%\begin{align}\label{eq_essential-target}
%   V_E =
%   \begin{bmatrix}
%     0 & 0 & \cdots & 0 & 1 \\
%     0 & 1 & 0 & \cdots & 0 \\
%     \vdots & \ddots & \ddots & \ddots & \vdots \\
%      0 & \cdots & 0 & 1 & 0 \\
%     1 & 0 & \cdots & 0 & 0
%   \end{bmatrix} \in \mathbb{C}^{(d+1)\times (d+1)},
%(0,1,2,\dots,d-1) \rightarrow (d-1,1,2,\dots,d-2,0),
% \begin{pmatrix}
% 0 & 1 & 2 & \dots & d-2 & d-1\\
% d-1 & 1 & 2 & \dots & d-2 & 0
% \end{pmatrix}
%\end{align}
%which involves $E=d+1$ essential states. 
To evaluate the amount of leakage to higher energy
levels, we add one guard level ($G=1$) and evolve a total of $N=d+2$ states in Schr\"odinger's equation. Note that the guard level is left unspecified in the target gate transformation. We implement the SWAP gates on a multi-level qudit with transition frequency $\omega_1/2\pi = 4.8$ GHz and self-Kerr coefficient $\xi_1/2\pi = 0.22$ GHz. We apply the rotating wave approximation, with detuning $\Delta_1=0$, and model the dynamics with the system Hamiltonian \eqref{eq_hamsysrot}. As a realistic model for current superconducting quantum devices, we impose the control amplitude restrictions
\begin{align} \label{eq_amp-bounds}
  \mbox{max}_t|d(t;\alphab)| \leq c_\infty,\quad
  \frac{c_\infty}{2\pi} = 9\,\mbox{MHz},
\end{align}
in the rotating frame of reference.

\subsection{Setup of simulation codes}
QuTiP/pulse\_optim can minimize the target gate fidelity, ${\cal J}_1$, but does not suppress occupation of higher energy states. Thus, it does {\em not} minimize terms of the type ${\cal J}_2$.  As a proxy for ${\cal J}_2$, we append one additional energy level to the simulation and measure its occupation as an estimate of leakage to higher energy states.  In pulse\_optim, the control functions are discretized on the same grid in time as Schr\"odinger's equation and no smoothness conditions are imposed on the control functions. 

Grape-TF also discretizes the control functions on the same grid in time as Schr\"odinger's equation. It minimizes an objective function that consists of a number of user-configurable parts. In our test, we minimize the gate infidelity (${\cal J}_1$) and the occupation of one guard (forbidden) energy
level (similar to ${\cal J}_2$). To smooth the control functions in time, the objective function also includes terms to minimize their first and second time derivatives. 
%The various parts of the objective function are weighted together by user-specified coefficients. 
The gradient of the objective function is calculated using the automatic differentiation (AD) technique, as implemented in the TensorFlow package. 

In Juqbox, we trigger the first $d$ transition frequencies in the system Hamiltonian by using $d$ carrier waves in the control functions, with frequencies
\begin{align*}
    \Omega_{1,k} = (k-1) (-\xi_1),\quad k=1,2,\ldots,N_f,\quad N_f = d.
\end{align*}

For all three codes, a pseudo-random number generator is used to construct the initial guesses for the parameter vector.

The pulse\_optim and Juqbox simulations were run on a Macbook Pro with a 2.6 GHz Intel iCore-7 processor. To utilize the GPU acceleration in TensorFlow, the Grape-TF simulations were run on one node of the Pascal machine at Livermore Computing, where each node has an Intel XEON E5-2695 v4 processor with two NVIDIA P-100 GPUs.

\subsection{Numerical results}
A SWAP gate where the control functions meet the control amplitude bounds \eqref{eq_amp-bounds} can only be realized if the gate duration is sufficiently long. Furthermore, the minimum gate duration increases with $d$. For each value of $d$, we used numerical experiments to determine a duration $T_d$ such that at least two of the three simulation codes could find a solution with a small gate infidelity. For Juqbox, we estimated the number of time steps using the technique in Section~\ref{sec_time-step-est} with $C_P=80$. The number of control parameters follow from $D=2N_f D_1$, where
$N_f=d$ is the number of carrier wave frequencies used and $D_1$ is the number of B-splines per control functions. We use $D_1=10$ for $d=3,4,5$ and $D_1=20$ for $d=6$. For pulse\_optim and Grape-TF, we calculate the number of time steps based on the shortest transition period, corresponding to the highest transition frequency in the system. We then use 40 time steps per
shortest transition period to resolve the control functions. For both GRAPE methods there are 2 control parameters per time step. The main simulation parameters are given in Table~\ref{tab_params}.
\begin{table}[tph]
    \begin{center}
\begin{tabular}{r|r||r|r||r|r}
%\multicolumn{2}{c||}{} & \multicolumn{2}{c||}{Juqbox} & %\multicolumn{2}{c}{GRAPE} \\ \hline
%$d$ & $T_d$ [ns] & $M$     & $D$ & $M$   & $D$\\ \hline
\multicolumn{2}{c||}{} & \multicolumn{2}{c||}{\# time steps} & \multicolumn{2}{c}{\# parameters} \\ \hline
$d$ & $T_d$ [ns] & Juqbox & GRAPE & Juqbox & GRAPE\\ \hline
3   & 140        & 14,787  & 4,480 & 60 & 8,960 \\ \hline 
4   & 215        & 37,843  & 7,568 & 80  & 15,136 \\ \hline 
5   & 265        & 69,962  & 11,661 & 100 & 23,322 \\ \hline 
6   & 425        & 157,082 & 22,441 & 240 & 44,882 \\ \hline 
\end{tabular}
\caption{Gate duration, number of time steps ($M$) and total number of control parameters ($D$) in the $|0\rangle \leftrightarrow |d\rangle$ SWAP gate simulations. The number of time steps and control parameters are the same for pulse\_optim and Grape-TensorFlow.}\label{tab_params}
\end{center}
\end{table}

Optimization results for the pulse\_optim, Grape-TF and Juqbox codes are presented in Figure~\ref{fig_plot+table}.
%Tables~\ref{tab_qutip}, \ref{tab_grape} and~\ref{tab_juqbox}.
%Due to the randomized initial control
%functions, some variation in computational results were observed.
The pulse\_optim code generates piecewise constant control functions that are very noisy and may therefore be hard to realize experimentally. To obtain a realistic estimate of the resulting dynamics, we interpolate the optimized control functions on a grid with 20 times smaller time step
and use the \verb|mesolve()| function in QuTiP to calculate the evolution of the system from each initial state. We then evaluate the gate infidelity using the states at the final time.
%,denoted by ${\cal J}_1^*$ in Table~\ref{tab_qutip}. 
Since the control functions from Grape-TF and
Juqbox are significantly smoother, we report the gate fidelities as calculated by those codes.
\begin{figure}[htp]
%   \begin{subfigure}{1.0\textwidth}
        \begin{center}
    \begin{subfigure}{0.49\textwidth}
        \includegraphics[width=1.0\linewidth]{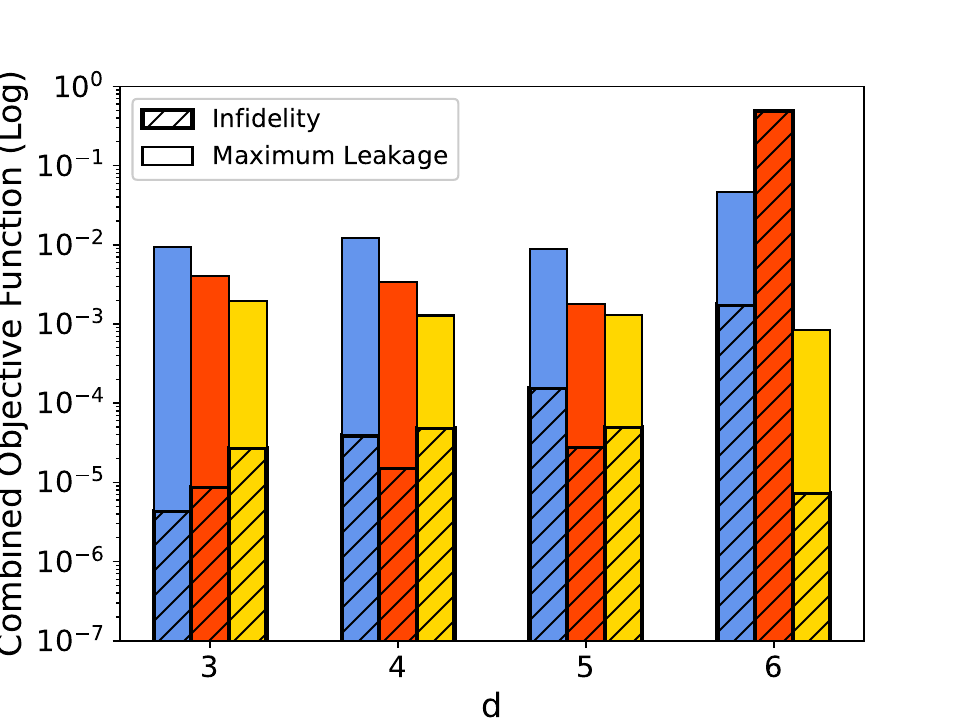}
        \caption{Infidelity and maximum leakage.}
%        \end{center}
    \end{subfigure}
    \begin{subfigure}{0.49\textwidth}
        \begin{center}
            \iffalse
            \begin{tabular}{r|r|r|r}
              $d$ & PO &  G-TF & JB
              \\ \hline
              3 &  30  & 2,062  & 55 \\ \hline
              4 &  108 & 10,601 & 151 \\ \hline
            %  4 & 3.61e-5 & 1.20e-2   & 5.65e-2 & 83 & 97 \\ \hline
            %  4 & 3.26e-5 & 6.00e-2   & 5.65e-2 & 177 & 201 \\ \hline
              5 & 385  & 28,366 & 291\\ \hline
              6 & 894  & 81,765 & 1255  \\ \hline
            \end{tabular}
            \fi 
        \includegraphics[width=1.0\linewidth]{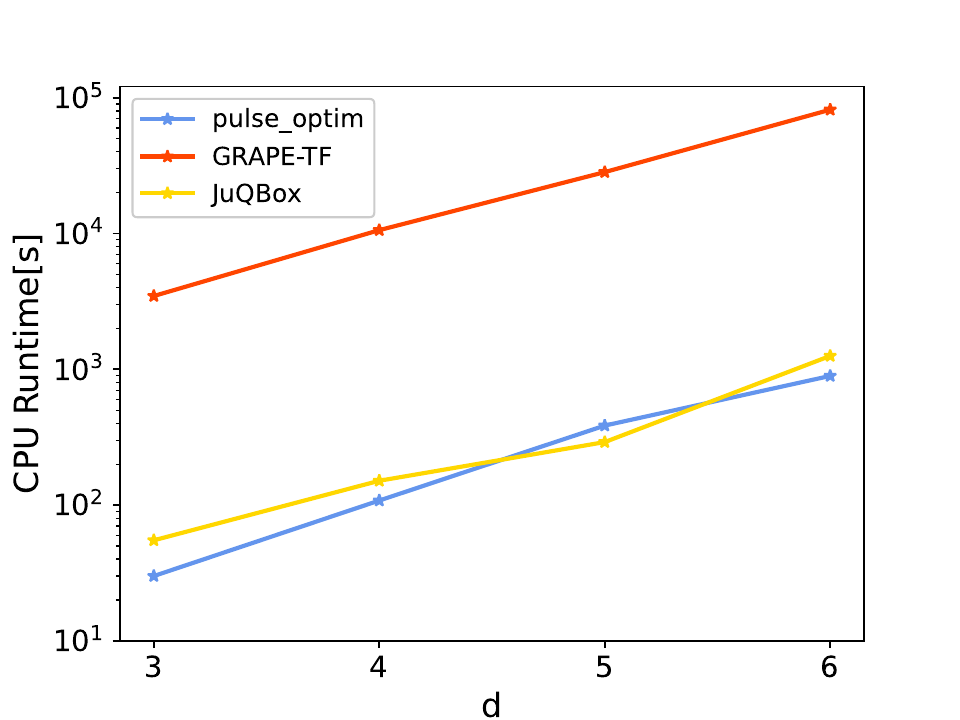}
    \end{center}
    \caption{CPU times [sec].} 
  \end{subfigure}
  \caption{Optimal control results for SWAP $|0\rangle \leftrightarrow |d\rangle$ gates using pulse\_optim (blue), Grape-TensorFlow (red) and Juqbox (yellow). (a) Gate infidelity and maximum occupation of the forbidden state $|d+1\rangle$; (b) CPU timings.}\label{fig_plot+table}
    \end{center}
\end{figure}

For the $|0\rangle \leftrightarrow |3\rangle$, $|0\rangle \leftrightarrow |4\rangle$ and $|0\rangle \leftrightarrow |5\rangle$ SWAP gates, all three codes produce control functions with very small gate infidelities. 
%We note that the population of the guard level, $|\psi^{(d+1)}|^2$, is about an
%order of magnitude larger with pulse\_optim than with Juqbox; the guard level population from Grape-TF are somewhere in between. 
The most significant difference between the codes occurs for the
$d=6$ SWAP gate. Here, the Grape-TF code fails to produce a small gate infidelity after running for almost 23 hours and the pulse\_optim code results in a gate fidelity that is about 2 orders of magnitude larger than Juqbox.
While pulse\_optim and Juqbox require comparable amounts of CPU time to converge, the Grape-TF code is between 50-100 times slower despite the GPU acceleration.

To compare the smoothness of the optimized control functions, we study the Fourier spectra of the laboratory frame control functions, see Figure~\ref{fig_ctrl-fft}. Note that pulse\_optim produces a significantly noisier control function compared to the other two codes. The control function from Grape-TF is significantly smoother, even though its spectrum includes some noticeable peaks at frequencies that do not correspond to transition frequencies in the system. The Juqbox simulation results in a laboratory frame control function where each peak in the spectrum corresponds to a transition frequency in the Hamiltonian.

\begin{figure}[htp]
\begin{center}
    \begin{subfigure}{0.49\textwidth}
    \begin{center}
    \includegraphics[width=1.0\linewidth]{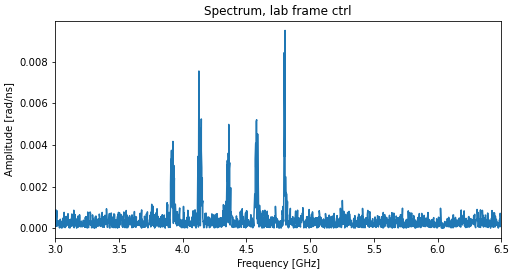}
    \caption{QuTiP/pulse\_optim.}
    \end{center}
    \end{subfigure}
    \begin{subfigure}{0.49\textwidth}
    \begin{center}
    %\vspace{3mm}
    \includegraphics[width=1.0\linewidth]{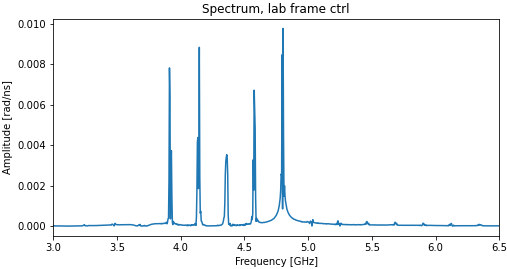}
    \caption{Grape-Tensorflow.}
    \end{center}
    \end{subfigure}
    \begin{subfigure}{0.49\textwidth}
    \begin{center}
    %\vspace{3mm}
    \includegraphics[width=1.0\linewidth]{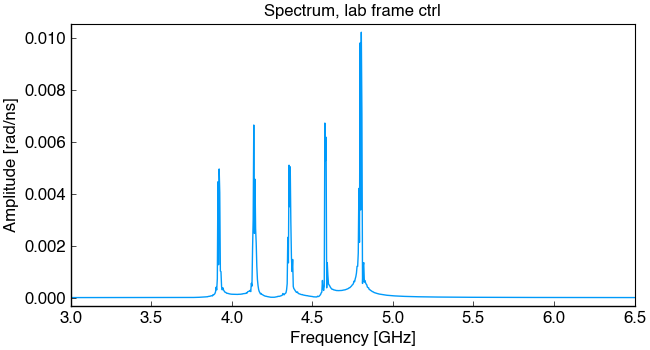}
    \caption{Juqbox.}
    \end{center}
    \end{subfigure}
\caption{Magnitude of the Fourier spectrum of the laboratory frame control function for the $|0\rangle \leftrightarrow |5\rangle$ SWAP gate on a single qudit.}\label{fig_ctrl-fft}    
\end{center}
\end{figure}

\section{Conclusions}\label{sec_conc}

In this paper we developed numerical methods for optimizing control functions for realizing logical gates in closed quantum systems where the state is governed by Schr\"odinger's equation. By asymptotic expansion, we calculated the resonant frequencies in the system Hamiltonian, corresponding to transitions between energy levels in the state vector. We introduced a novel parameterization of the control functions using B-spline basis functions that act as envelopes for carrier waves, with frequencies that match the transition frequencies. This approach allows the number of control parameters to be independent of, and significantly smaller than, the number of time steps for integrating Schr\"odinger's equation. 

The objective function in the optimal control problem consists of two parts: the infidelity of the final gate transformation and a time-integral for evaluating leakage to higher energy levels. We apply a ``first-discretize-then-optimize" approach and outline the derivation of the discrete adjoint equation that is solved to efficiently calculate the gradient of the objective function.

To demonstrate our approach, we optimized the control functions for a CNOT gate on two coupled qudits, resulting in a gate fidelity exceeding 99.9\%. Based on a simple noise model, we also generalized the proposed method to calculate risk-neutral controls that are resilient to uncertainties in the Hamiltonian model. The results are promising and indicate that a more systematic study of optimization under uncertainty can yield controls that are robust to noise in quantum systems. We finally compared the performance of the proposed method, implemented in the Juqbox package~\cite{Juqbox-software}, and two implementations of the GRAPE algorithm: the pulse\_optim method in QuTiP~\cite{qutip} and Grape-TensorFlow~\cite{Leung-2017}. The codes were compared on a set of SWAP gates on a single qudit. Here, Juqbox was found to run 50-100 times faster than Grape-TensorFlow and produce control functions that are significantly smoother than pulse\_optim. The most significant difference between the codes occurred for the $|0\rangle \leftrightarrow |6\rangle$ SWAP gate, where the Grape-TF code failed to solve the optimal control problem.
%acting on the states $0\rangle \leftrightarrow |d\rangle$, for $d=3,4,5,6$. 
%While all three codes found accurate solutions for $d=3,4,5$, Grape-Tensorflow needed 50-100 times more CPU time than the other codes. For $d=6$, only Juqbox found an accurate solution. Here, the control functions optimized by QuTip are very noisy and Grape-TensorFlow converges to a local minima with very large gate infidelity.
%   Juqbox was found to produce significantly smoother control
%   functions than QuTiP/pulse\_optim, while using about the same computational
%   resources. 
%QuTip used slightly less CPU time than Juqbox, while Juqbox was also found to run about 50-100 times faster than Grape-Tensorflow.

%it would be interesting to study if the convergence properties of the optimization algorithm can be improved by modifying the objective function. 
In future work, we intend to generalize our approach to solve optimal control problems for larger quantum systems.

\section*{Acknowledgment} 
We would like to thank Prof.~Daniel Appel\"o for bringing the St\"ormer-Verlet method to our attention. We also thank Dr.~Bj\"orn Sj\"ogreen for fruitful discussions on resonant frequencies, as well as two anonymous referees who helped improve the presentation.

This work was performed under the auspices of the U.S. Department of Energy by Lawrence Livermore National Laboratory under Contract DE-AC52-07NA27344. This is contribution LLNL-JRNL-823853. The views and opinions of the authors do not necessarily reflect those of the U.S. government or Lawrence Livermore National Security, LLC neither of whom nor any of their employees make any endorsements, express or implied warranties or representations or assume any legal liability or responsibility for the accuracy, completeness, or usefulness of the information contained herein.

\appendix

\section{Composite quantum systems and essential states}\label{app_essential}

To simplify the notation we assume a bipartite quantum system ($Q=2$); the case $Q=1$ is trivial and $Q>2$ follows by straightforward generalizations. Let the number of energy levels in the subsystems be $n_1$ and $n_2$, respectively, for a total of $N=n_1\cdot n_2$ states in the coupled system. We use the canonical unit vectors $\eb^{(n_q)}_j\in\mathbb{R}^{n_q}$, for $j=0,\ldots, n_q-1$, as a basis for subsystem $q$, where the superscript indicates its size. These basis vectors can be used to describe the total state of the coupled system,
\begin{align}
  \psib = \sum_{j_2 = 0}^{n_2-1} \sum_{j_1 = 0}^{n_1 -1}
  \psi_{j_2, j_1} \left(\eb_{j_2}^{(n_2)} \otimes
\eb_{j_1}^{(n_1)}\right) = \sum_{k=0}^{N-1} \vec{\psi}_{k} \eb_k^{(N)}.
\end{align}
Here, $\vec{\psi}\in\mathbb{C}^N$ denotes the one-dimensional representation of the two-dimensional state vector $\psib$, using a natural ordering of the elements, i.e., $\vec{\psi}_k = \psi_{j_2, j_1}$ for $k = j_1 + n_1 j_{2} =:k_{ind}(j_2, j_1)$. The mapping $k = k_{ind}(j_2,j_1)$ is invertible for $k\in[0,N-1]$.

We classify the energy levels in the total state vector as either essential or guarded levels. The unitary gate transformation is only specified for the essential levels. The guard levels are retained to justify the truncation of the modal expansion of Schr\"odinger's equation, and to avoid leakage of probability to even higher energy levels.

Let the number of essential energy levels in the subsystems be $m_1$ and $m_2$, respectively, where $0<m_q \leq n_q$. Similar to the total state vector, we use the canonical unit vectors as a basis for the essential subspace of each subsystem. The total number of essential levels equals $E = m_1\cdot m_2$. Let the essential energy levels in the total state vector be represented by the essential state vector $\phib$. Similar to the full state vector, we flatten its two-dimensional indexing using a natural ordering,
\begin{align}
  \phib = \sum_{i_2 = 0}^{m_2-1} \sum_{i_1 = 0}^{m_1 -1}
  \phi_{i_2, i_1} \left(\eb_{i_2}^{(m_2)} \otimes \eb_{i_1}^{(m_1)}\right) = \sum_{\ell=0}^{E-1} \vec{\phi}_{\ell} \eb_\ell^{(E)}\in \mathbb{C}^E,
\end{align}
where $\ell = i_1 + m_1 i_{2} =: \ell_{ind}(i_2, i_1)$. The elements in the essential state vector are defined from the total state vector by $\phi_{i_2, i_1} = \psi_{i_2,i_1}$, for $i_1 \in[0, m_1-1]$ and $i_2 \in [ 0, m_2-1]$.

The initial condition for the solution operator matrix $U(t)$ in Schr\"odinger's equation \eqref{eq:schrodinger_matrix} needs to span a basis for the $E$-dimensional essential state space. Here we use the canonical basis consisting of the unit vectors $\eb_{\ell}^{(E)}$. Let the columns of the initial condition matrix be $U_0=[\gb_0, \gb_1, \ldots, \gb_{E-1}] \in \mathbb{R}^{N\times E}$. Because the total probabilities in each column vector $\gb_k$ must sum to one, the basis vectors in the total state space become
\begin{align}
  \gb_{\ell} = U_0 \eb^{(E)}_{\ell},\quad
    g_{k,\ell} = \begin{cases}
    1,& k = k_{ind}(i_2(\ell), i_1(\ell)),\\
    0,& \mbox{otherwise},
    \end{cases}\quad \mbox{for $\ell = 0,1,\ldots,E-1$}.
\end{align}
Here, $i_2(\ell)= \lfloor \ell/m_1 \rfloor$ and $i_1(\ell) = \ell - m_1\cdot i_2(\ell)$.

The target gate matrix $V_E\in\mathbb{C}^{E\times E}$ defines the unitary transformation between the essential levels in the initial and final states, $\phib_T = V_E \phib_0$, for all $\phib_0\in\mathbb{C}^{E}$. 
Because $V_E$ is unitary, each of its columns has norm one. To preserve total probabilities, we define the target gate transformation according to
\begin{align}\label{eq_target}
  V_{tg} = U_0 V_E \in \mathbb{C}^{N\times E}.
\end{align}
This implies that each column of $V_{tg}$ also has norm one.

\iffalse
\paragraph{Example: A bipartite quantum system}
As a small example, consider a composite system considering of two subsystems, each with three energy levels, $n_1=n_2=3$. In this case, the dimension of the full state vector is $N=9$. It can be written as:
\begin{align}
    \psib = \sum_{j_2=0}^{n_2-1} \sum_{j_1=0}^{n_1-1} \psi_{j_2, j_1} \left( \eb_{j_2}^{(3)} \otimes \eb_{j_1}^{(3)} \right)=
    \begin{bmatrix}
      \psi_{0,0}\\
      \psi_{0,1}\\
      \psi_{0,2}\\
      \psi_{1,0}\\
      \psi_{1,1}\\
      \psi_{1,2}\\
      \psi_{2,0}\\
      \psi_{2,1}\\
      \psi_{2,2}
    \end{bmatrix},\quad 
    \psib = \sum_{k=0}^8 \vec{\psi}_k \eb_k^{(9)} =
    \begin{bmatrix}
      \vec{\psi}_{0}\\
      \vec{\psi}_{1}\\
      \vec{\psi}_{2}\\
      \vec{\psi}_{3}\\
      \vec{\psi}_{4}\\
      \vec{\psi}_{5}\\
      \vec{\psi}_{6}\\
      \vec{\psi}_{7}\\
      \vec{\psi}_{8}
    \end{bmatrix}.
\end{align}
If both systems have two essential levels, i.e. $m_1 = m_2 = 2$, there are $E=4$ essential levels in the composite system. In this case the total and essential state vectors are related by
\begin{align}
    \vec{\phi} = \begin{bmatrix}
      \psi_{0,0}\\
      \psi_{0,1}\\
      \psi_{1,0}\\
      \psi_{1,1}
    \end{bmatrix} = U_0^\dagger \vec{\psi},\quad
    U_0^\dagger = \begin{bmatrix}
      1 & 0 & 0 & 0 & 0 & 0 & 0 & 0 & 0 \\
      0 & 1 & 0 & 0 & 0 & 0 & 0 & 0 & 0 \\
      0 & 0 & 0 & 1 & 0 & 0 & 0 & 0 & 0 \\
      0 & 0 & 0 & 0 & 1 & 0 & 0 & 0 & 0 
    \end{bmatrix}\in\mathbb{R}^{4\times 9}.
\end{align}
\fi
\section{The Hamiltonian in a rotating frame of reference}\label{app_RotatingFrame}

The time-dependent and unitary change of variables
$\widetilde{\psib}(t) = R(t)\psib(t)$ where $R^\dag R = I$,
results in the transformed Schr\"odinger equation \begin{equation}\label{eq_timedep_trans}
\frac{d\widetilde{\psib}}{dt} = -i \widetilde{H}(t) \widetilde{\psib},\quad \widetilde{H}(t) = R(t)H(t)R^\dag(t) + i \dot{R}(t) R^\dag(t).
\end{equation}
The rotating frame of reference is introduced by taking the unitary transformation to be the matrix \eqref{eq_rot-trans}. Because both $R(t)$ and the system Hamiltonian \eqref{eq_hamsys} are diagonal, $R H_s R^\dagger = H_s$. The time derivative of the transformation can be written 
\begin{align}
    \dot{R}(t) = \left(\bigoplus_{q=Q}^1 i \omega_{r,q} A_q^\dagger A_q \right) \left(\bigotimes_{q=Q}^1\exp{\left(i\omega_{r,q} t A_q^\dagger A_q \right)}\right), 
\end{align}
where $\oplus$ denotes the Kronecker sum, $C \oplus D = C \otimes I_D + I_C \otimes D$.
Therefore,
\begin{align}
   \dot{R}(t) R^\dagger(t) = \bigoplus_{q=Q}^1 i \omega_{r,q} A_q^\dagger A_q = \sum_{q=1}^Q i\omega_{r,q} a_q^\dagger a_q.
\end{align}

As a result, the first term in the Hamiltonian \eqref{eq_hamsys} is modified by the term $i \dot{R}(t) R^\dag (t)$. After noting that $R a_q = e^{-i\omega_{r,q} t} a_q R$, the transformed Hamiltonian can be written as
\begin{align}
  H^{rw}_s &= \sum_{q=1}^Q  \left(\Delta_q a_q^\dagger a_q -\frac{\xi_q}{2} a_q^\dag a_q^\dag a_q a_q - \sum_{p>q} \xi_{qp} a_q^\dag a_q a_p^\dag a_p  \right), \\
  \widetilde{H}_c(t) &= \sum_{q=1}^Q f_q(t;\alphab) \left( e^{-i\omega_{r,q} t} a_q + e^{i\omega_{r,q} t} a_q^\dag \right),\label{eq_trans-hamiltonian}
\end{align}
where $\Delta_q = \omega_q - \omega_{r,q}$ is the detuning frequency. The above system Hamiltonian corresponds to \eqref{eq_hamsysrot}.

To slow down the time scales in the control Hamiltonian, we want to absorb the highly oscillatory factors $\exp(\pm i\omega_{r,q} t)$ into $f_q(t)$. Because the control function $f_q(t)$ is real-valued, this can only be done in an approximate fashion. We make the ansatz,
\begin{align}
  f_q(t) := 2\, \mbox{Re}\left( d_q(t) e^{i\omega_{r,q} t} \right)
  = d_q(t)e^{i\omega_{r,q} t} + \bar{d}_q(t)e^{-i\omega_{r,q} t},
\end{align}
where $\bar{d}_q$ denotes the complex conjugate of $d_q$. By substituting this expression into the transformed control Hamiltonian \eqref{eq_trans-hamiltonian}, we get
\begin{align*}
  \widetilde{H}_c(t) 
  = \sum_{q=1}^Q \left( d_q(t) a_q + \bar{d}_q(t) a_q^\dagger 
  +  \bar{d}_q(t) e^{-2i\omega_{r,q} t} a_q  + d_q(t)e^{2i\omega_{r,q} t} a_q^\dag \right).
\end{align*}
The rotating wave approximation (RWA) follows by ignoring terms that oscillate with
frequency, $\pm 2i\omega_{r,q}$, resulting in the approximate control Hamiltonian \eqref{eq_hamctrlrot}.
%, as will be demonstrated in the following section.

%In the remainder of the paper, the Schr\"odinger equation is always solved under the rotating frame approximation and we drop the tildes on the state vector and the Hamiltonian matrices.

\section{Conditions for resonance}\label{app_resonance}

Consider the scalar function $y(t) := \psi^{(1)}_{\jb}(t)$. It satisfies an ordinary differential equation of the form
\begin{align}\label{eq_scalar-first-order}
    \frac{d y(t)}{dt} +\kappa_{\jb} y(t) = \sum_{\ell} c_\ell e^{i\nu_\ell t},\quad \nu_k\in\mathbb{R}.
\end{align}
We are interested in cases when $y(t)$ grows in time, corresponding to resonance. Conditions for resonance are provided in the following lemma.
\begin{lemma}\label{lem_resonance}
Let $\kappa\in\mathbb{R}$ and $\nu\in\mathbb{R}$ be constants. The solution of the scalar ordinary differential equation
\begin{align}
    \frac{d y(t)}{d t} + i\kappa y(t) = c e^{i\nu t},\quad y(0) = y_0,
\end{align}
is given by 
\begin{align}
y(t) = \begin{cases}
    y_0 e^{-i\kappa t} + c t e^{-i\kappa t},\quad & \nu+\kappa = 0,\\
    y_0 e^{-i\kappa t} - \dfrac{ic}{\nu + \kappa} \left( e^{i\nu t} - e^{-i\kappa t}\right),&\mbox{otherwise}.
    \end{cases}
\end{align}
Corresponding to resonance, the function $y(t)$ grows linearly in time when $\nu+\kappa=0$ and $c\ne 0$.
\end{lemma}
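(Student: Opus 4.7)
The plan is to treat this as a standard scalar first-order linear ODE and solve it via the integrating factor method. First I would multiply both sides of $dy/dt + i\kappa y = c e^{i\nu t}$ by the integrating factor $e^{i\kappa t}$ to rewrite the equation as
\begin{equation*}
\frac{d}{dt}\left(e^{i\kappa t} y(t)\right) = c\, e^{i(\nu+\kappa)t}.
\end{equation*}
Integrating from $0$ to $t$ and using $y(0) = y_0$ then gives $e^{i\kappa t} y(t) - y_0 = c \int_0^t e^{i(\nu+\kappa)s}\, ds$. The problem now reduces to evaluating the right-hand integral.

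The integral naturally splits into two cases depending on whether the oscillatory forcing frequency resonates with the homogeneous frequency. When $\nu + \kappa = 0$ the integrand collapses to the constant $c$ and the integral evaluates to $ct$, so multiplying through by $e^{-i\kappa t}$ yields $y(t) = y_0 e^{-i\kappa t} + ct e^{-i\kappa t}$. When $\nu + \kappa \ne 0$, direct integration gives $\tfrac{c}{i(\nu+\kappa)}\bigl(e^{i(\nu+\kappa)t} - 1\bigr)$; multiplying by $e^{-i\kappa t}$ and simplifying with $1/i = -i$ produces the second branch of the claimed formula, namely $y(t) = y_0 e^{-i\kappa t} - \tfrac{ic}{\nu+\kappa}(e^{i\nu t} - e^{-i\kappa t})$. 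A direct substitution check (differentiate each branch and verify it reproduces the ODE and the initial condition) is routine and I would include it only briefly for completeness.

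The resonance claim then follows immediately from inspection of the two branches. In the non-resonant case the solution is a sum of two bounded complex exponentials of unit modulus, so $|y(t)|$ is uniformly bounded by $|y_0| + 2|c|/|\nu+\kappa|$. In the resonant case $\nu+\kappa = 0$, the triangle inequality gives $|y(t)| \geq |c|\,|t| - |y_0|$ since $|e^{-i\kappa t}| = 1$, so $|y(t)|$ grows linearly with $t$ provided $c \ne 0$. There is no real obstacle in this proof: the only subtlety is the degeneracy of the integration when the forcing frequency exactly cancels the homogeneous frequency, and this is handled transparently by the case split in evaluating $\int_0^t e^{i(\nu+\kappa)s}\, ds$.
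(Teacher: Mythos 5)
Your proof is correct and matches the paper's approach: the paper disposes of this lemma with ``follows by direct evaluation,'' and your integrating-factor derivation (with the case split on $\nu+\kappa=0$ versus $\nu+\kappa\neq 0$ arising from the integral $\int_0^t e^{i(\nu+\kappa)s}\,ds$) is exactly the standard elementary computation that assertion presupposes. The boundedness estimate in the non-resonant case and the lower bound $|y(t)|\geq |c|\,|t|-|y_0|$ in the resonant case correctly justify the linear-growth claim.
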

\begin{proof}
Follows by direct evaluation.
\end{proof}

We proceed by analyzing the right hand side of \eqref{eq_psi-one}.
It can be shown that the forcing function $\fb^{(k)}(t)$ is of the form
\iffalse
The forcing function $\fb^{(k)}(t)$ contains the terms
\begin{align}
\{a_k \psib^{(0)}\}_{\jb} = 
  \begin{cases}
    g_{\jb+\eb_k}\sqrt{j_k+1}\, e^{-(i\kappa_{\jb+\eb_k}t)},\quad &j_k \in[0,n_1-2],\\
    0,&j_k = n_k-1,
  \end{cases}
\end{align}
and
\begin{align}
\{a_k^\dagger \psib^{(0)}\}_{\jb} = 
     \begin{cases}
     0, \quad & j_k = 0,\\
     g_{\jb-\eb_k}\sqrt{j_k}\, e^{-(i\kappa_{\jb-\eb_k}t)},\quad &j_k \in[1,n_k-1].
     \end{cases}
\end{align}
Therefore,
\fi
\begin{align}
    \fb^{(k)}_{\jb}(t) &= \begin{cases}
   -i g_{\jb+\eb_k}\sqrt{j_k+1}\, e^{i(\Omega_k-\kappa_{\jb+\eb_k})t},\quad & j_k=0,\\
   \Theta_k(t),\quad &j_k\in[1,n_k-2],\\
   -i g_{\jb-\eb_k}\sqrt{j_k}\, e^{-i(\Omega_k +\kappa_{\jb-\eb_k})t},\quad & j_k = n_k-1,
  \end{cases}
\end{align}
where $\Theta_k(t) = -i g_{\jb+\eb_k}\sqrt{j_k+1}\, e^{i(\Omega_k-\kappa_{\jb+\eb_k})t} 
  - i g_{\jb-\eb_k}\sqrt{j_k}\, e^{-i(\Omega_k +\kappa_{\jb-\eb_k})t}$.

The right hand side satisfies $\fb(t) = \fb^{(1)}(t) + \fb^{(2)}(t)$.
The first set of frequencies and coefficients on the right hand side of \eqref{eq_scalar-first-order} satisfy
\begin{align*}
    \nu_{1} = 
    \Omega_k-\kappa_{\jb+\eb_k},\quad
    c_{1} = -i g_{\jb+\eb_k}\sqrt{j_k+1},
\end{align*}
for $k=\{1,2\}$ and $j_k\in[0,n_k-2]$. The second set of frequencies and coefficients are 
\begin{align*}
    \nu_{2} = 
    -(\Omega_k+\kappa_{\jb-\eb_k}),\quad
    c_{2} = -i g_{\jb-\eb_k}\sqrt{j_k}.
\end{align*}
for $k=\{1,2\}$ and $j_k \in[1,n_k-1]$.
From Lemma~\ref{lem_resonance}, component 
$\psi^{(1)}_{\jb}(t)$ is in resonance if 
$(\kappa_{\jb}+\nu_{1}=0, c_{1}\ne 0)$ or 
$(\kappa_{\jb}+\nu_{2}=0, c_{2}\ne 0)$. 
These conditions are equivalent to \eqref{eq_cond1} and \eqref{eq_cond2}, which proves Lemma~\ref{lem_freq}.

%Due to energy conservation in Schr\"odinger's equation, the linear growth in time only occurs for short times.

\section{Non-diagonal system Hamiltonians}\label{app_JC-Ham}

Current superconducting quantum computing hardware  consists of qubits coupled by high quality resonators. Here we consider a transmon (artificial atom) qubit in which a Josephson junction is coupled in parallel to a capacitor. This section provides a brief overview of the derivation of the Hamiltonian model in the dispersive limit. For further details, see for example the review paper by Blais et al.~\cite{BlaisEtal-21}. The interaction between the transmon qubit and the resonator can be modeled by the Hamiltonian~\cite{BlaisEtal-21},
\begin{align}
 H_{qr} = \omega_r a^\dagger a + \omega_q b^\dagger b - \frac{E_C}{2} b^\dagger b^\dagger b b + g(b^\dagger - b)(a^\dagger - a),   
\end{align}
where $\omega_r$ is the resonant frequency of the resonator, $\omega_q$ is the transmon frequency, $E_C$ is the charging energy, and $g$ is the resonator-transmon coupling constant. Further, $a$ and $b$ are the respective lowering operators of the resonator and the transmon. The Hamiltonian can be simplified when the coupling constant is small compared to the resonant and transmon frequencies ($|g|\ll \omega_r$, $|g|\ll \omega_q$), resulting in
\begin{align}
 H_{qr} \approx H_{lin} + H_{nl},\quad H_{lin} =\omega_r a^\dagger a + \omega_q b^\dagger b - \frac{E_C}{2} b^\dagger b^\dagger b b,\quad H_{nl} = g(b^\dagger a + b a^\dagger).   
\end{align}
Let $\Delta = \omega_q - \omega_r$ be the qubit-resonator detuning frequency and define $\lambda = g/\Delta$. In the dispersive limit, when $|\lambda| = |g/\Delta| \ll 1$, the qubit and resonator are only weakly entangled and the Hamiltonian can be diagonalized by a Schrieffer-Wolff~\cite{SW-Bravyi-11, SW-original} expansion, or by using the Bogoliubov transformation~\cite{Boisson-Etal-09}. In the latter case, we take the unitary transformation to be $U_{disp} = \exp(\Lambda(a^\dagger b - a b^\dagger))$. Under this transformation, the lowering operators transform as $U_{disp}^\dagger a U_{disp} = \cos(\Lambda) a + \sin(\Lambda) b$ and $U_{disp}^\dagger b U_{disp} = \cos(\Lambda) b - \sin(\Lambda) a$. By choosing $\tan(2\Lambda) = 2\lambda$, the linear part of the Hamiltonian transforms to diagonal form,
\begin{align}
    U_{disp}^\dagger H_{lin} U_{disp} = \tilde{\omega}_r a^\dagger a + \tilde{\omega}_q b^\dagger b,
\end{align}
where the so-called ``dressed" frequencies are $\tilde{\omega}_r = 0.5(\omega_r + \omega_q - \sqrt{\Delta^2 + 4g^2})$ and $\tilde{\omega}_q = 0.5(\omega_r + \omega_q + \sqrt{\Delta^2 + 4g^2})$. After an additional Schrieffer-Wolff expansion in $|\lambda| \ll 1$, the non-linear part of the Hamiltonian can also be diagonalized, resulting in the dispersive Hamiltonian
\begin{multline}
    U_{disp}^\dagger (H_{lin} + H_{nl}) U_{disp}\approx \\
     \tilde{\omega}_r a^\dagger a + \tilde{\omega}_q b^\dagger b +  \frac{K_a}{2} a^\dagger a^\dagger a a +  \frac{K_b}{2} b^\dagger b^\dagger b b + \chi_{ab} a^\dagger a b^\dagger b =:  H_{disp}.
\end{multline}
In the dispersive regime, the coefficients $K_a$, $K_b$ and $\chi_{ab}$ are all negative. The dispersive Hamiltonian is diagonal. This approach can be generalized to several transmons and resonators~\cite{BlaisEtal-21}, resulting in the model~\eqref{eq_hamsys}.

\bibliographystyle{siam}
\bibliography{quantum}
\end{document}